%
%
%

\documentclass[11pt,twoside]{article}
\usepackage[letterpaper,hmargin=1in,vmargin=1in]{geometry}
\usepackage{times} 


\usepackage{amssymb,amsmath,amsthm} 
\usepackage{url}
\usepackage{enumitem}
\usepackage{calc}
\usepackage{subimages}
\setfigdir{.}

\hyphenation{bary-centers bary-centric complexes sim-plex sim-plex-es mani-fold tree-width}


\newtheorem{theorem}{Theorem}
\newtheorem{definition}{Definition}
\newtheorem{corollary}{Corollary}
\newtheorem{lemma}{Lemma}

\newcounter{problemcounter}
\setcounter{problemcounter}{1}
\newenvironment{problem}[4][]
{
	\medskip\noindent {\textbf{Problem
	\arabic{problemcounter}} (\textsc{#1}).\nopagebreak

	{\begin{tabular}{p{1.9cm}p{10.2cm}}	
		\textsc{Instance}: & #2 \\
		\textsc{Parameter}: & #3 \\
		\textsc{Question}: & #4 \\
	\end{tabular}}
	\stepcounter{problemcounter}}
	\medskip
}



\newcommand{\mathsymbol}[2]{\newcommand{#1}{\ensuremath{#2}}}
\newcommand{\mathcommand}[3]{\newcommand{#1}[1]{\ensuremath{#2{##1}#3}}}

\newcommand{\funname}[1]{\ifmmode \mathsf{#1}\else\textsf{#1}\fi}

\mathsymbol{\Mmatch}{M}
\mathsymbol{\Mmatchr}{\tilde{\Mmatch}}
\mathcommand{\Mmatchset}{\mathcal{M}(}{)}
\mathcommand{\Mmatchsetr}{\tilde{\mathcal{M}}(}{)}


\begin{document}


\title{Parameterized Complexity of Discrete Morse Theory}

\author{
             Benjamin A.~Burton\thanks{
	School of Mathematics and Physics, 
	The University of Queensland,
	Brisbane, Australia,
	\texttt{bab@maths.uq.edu.au}}
\and
            Thomas Lewiner\thanks{
	Department of Mathematics,
	Pontif{\'i}cia Universidade Cat{\'o}lica,
	Rio de Janeiro, Brazil,
	\texttt{thomas@lewiner.org}}
\and 
             Jo{\~a}o Paix{\~a}o\thanks{
	Department of Mathematics,
	Pontif{\'i}cia Universidade Cat{\'o}lica,
	Rio de Janeiro, Brazil,
	\texttt{jpaixao@mat.puc-rio.br}}
\and
             Jonathan Spreer\thanks{
	School of Mathematics and Physics,
	The University of Queensland,
	Brisbane, Australia,
	\texttt{j.spreer@uq.edu.au}}
}
\date{}

\maketitle

\begin{abstract}
Optimal Morse matchings reveal essential structures of cell complexes which lead to powerful tools to study discrete geometrical objects, in particular discrete $3$-manifolds. However, such matchings are known to be NP-hard to compute on $3$-manifolds, through a reduction to the erasability problem.

Here, we refine the study of the complexity of problems related to discrete Morse theory in terms of parameterized complexity. On the one hand we prove that the erasability problem is $W[P]$-complete on the natural parameter. On the other hand we propose an algorithm for computing optimal Morse matchings on triangulations of $3$-manifolds which is fixed-parameter tractable in the treewidth of the bipartite graph representing the adjacency of the $1$- and $2$-simplexes. This algorithm also shows fixed parameter tractability for problems such as erasability and maximum alternating cycle-free matching.
We further show that these results are also true when the treewidth of the dual graph of the triangulated $3$-manifold is bounded. Finally, we investigate the respective treewidths of simplicial and generalized triangulations of $3$-manifolds.
\end{abstract}



\noindent
{\bf Keywords}: discrete Morse theory, parameterized complexity, fixed parameter tractability, treewidth, $W[P]$-completeness, computational topology, collapsibility, alternating cycle-free matching





\section{Introduction}\label{sec:introduction}
Classical Morse theory~\cite{morse1931smoothmorse} relates the topology of a manifold to the critical points of scalar functions defined on it, providing efficient tools to understand essential structures on manifolds. Forman~\cite{forman1998morse} recently extended this theory to arbitrary cell complexes. In this discrete version of Morse theory, alternating cycle-free matchings in the Hasse diagram of the cell complex, so-called \emph{Morse matchings}, play the role of smooth functions on the manifold~\cite{forman1998morse,chari2000decomposition}. For example, similarly to the smooth case~\cite{reeb1946points}, a closed manifold admitting a Morse matching with only two unmatched (critical) elements is a sphere~\cite{forman1998morse}.
The construction of specific Morse matchings has proven to be a powerful tool to understand topological~\cite{forman1998morse,joswig2004computing,joswig2006computing,lewiner2003optimal,lewiner2003toward}, combinatorial~\cite{chari2000decomposition,jonsson2005graph,lange2004topocombinatorics} and geometrical~\cite{gyulassy2008visualization,lewiner2005geometric,Robins2011,reininghaus2012computationalmorse} structures of discrete objects.

Morse matchings that minimize the number of critical elements are known as {\em optimal matchings}~\cite{lewiner2003optimal}. Together with their number and type of critical elements, these are topological (more precisely simple homotopy) invariants of the cell complex, just like in the case of the sphere described above. Hence, computing optimal matchings can be used as a purely combinatorial technique in computational topology~\cite{Dey1999comptopo}. Moreover, optimal Morse matchings are useful in practical applications such as volume encoding~\cite{lewiner2004apps,szymczak1999grow}, or homology and persistence computation~\cite{lewiner2005geometric,guenther12}.

However, constructing optimal matchings is known to be NP-hard on general $2$-complexes and on $3$-manifolds~\cite{joswig2004computing,joswig2006computing,lewiner2003optimal}. 
This result follows from a reduction to this problem from the closely related \emph{erasability problem}: how many faces must be deleted from a $2$-dimensional simplicial complex before it can be completely erased, where in each erasing step only {\em external triangles}, i.e.\ triangles with an edge not lying in the boundary of any other triangle of the complex, can be removed~\cite{eugeciouglu1996computationally}?
Despite this hardness result, large classes of inputs -- for which worst case running times suggest the problem is intractable -- allow the construction of optimal Morse matchings in a reasonable amount of time using simple heuristics~\cite{lewiner2003toward}. Such behavior suggests that, while the problem is hard to solve for some instances, it might be much easier to solve for instances which occur in practice. As a consequence, this motivates us to ask what {\em parameter} of a problem instance is responsible for the intrinsic hardness of the optimal matching problem. 

In this article, we study the complexity of Morse type problems in terms of parameterized complexity. Following Downey and Fellows~\cite{downey1994parameterized}, an NP-complete problem is called {\em fixed-parameter tractable} (FPT) with respect to a {\em parameter} $k \in \mathbb{N}$, if for every input with parameter less or equal to $k$, the problem can be solved in $O(f(k)\cdot n^{O(1)})$ time, where $f$ is an arbitrary function independent of the problem size $n$.
For NP-complete but fixed-parameter tractable problems, we can look for classes of inputs for which fast algorithms exist, and identify which aspects of the problem make it difficult to solve. Note that the significance of an FPT result strongly depends on whether the parameter is (i) small for large classes of interesting problem instances and (ii) easy to compute. 

In order to also classify fixed-parameter intractable NP-complete problems, Downey and Fellows~\cite{downey1994parameterized} propose a family of complexity classes called the {\em $W$-hierarchy}:
$FPT \subseteq W[1] \subseteq  W[2] \subseteq \cdots \subseteq W[P] \subseteq XP$.
The base problems in each class of the $W$-hierarchy are versions of satisfiability problems with increasing logical depth as parameter. 
Class $W[P]$ contains the satisfiability problems with unbounded logical depth. 
The rightmost complexity class $XP$ of the $W$-hierarchy contains all problems which can be solved in $O(n^k)$ time where $k$ is the parameter of the problem.

Here, we use the notion of the $W$-hierarchy in a geometric setting. More precisely, we determine the hardness of Morse type problems using the mathematically rigid framework of the $W$-hierarchy. Our first main result shows that the erasability problem is $W[P]$-complete (Theorems \ref{thm:inWP} and \ref{thm:WPhard}), where the parameter is the {\em natural parameter} -- the number of cells that have to be removed. In other words, we prove that the erasability problem is fixed-parameter intractable in this parameter. From a discrete Morse theory point of view, this reflects the intuition that reaching optimality in Morse matchings requires a global (at least topological) context. In this way, we also show that the $W$-hierarchy as a purely complexity theoretical tool can be used in a very natural way to answer questions in the field of computational topology.
Although there are many results about the computational complexity of topological problems~\cite{agol06-knotgenus,burton2012complexity,eugeciouglu1996computationally,malgouyres2008determining,tancer2008d}, to the authors' knowledge, erasability is the first purely geometric problem shown to be $W[P]$-complete.

Our second main result refines the observation that simple heuristics allow us to compute optimal matchings efficiently. For general $2$-complexes (and $3$-manifolds), the problem reduces directly to finding a maximal alternating cycle-free matching on a spine, i.e., a bipartite graph representing the $1$- and $2$-cell adjacencies~\cite{ayala2012perfect, joswig2006computing, lewiner2004apps} (Lemma~\ref{lem:morse_manifold}). To solve this problem, we propose an explicit algorithm for computing maximal alternating cycle-free matchings which is fixed-parameter tractable in the treewidth of this bipartite graph (Theorem \ref{thm:treewidth}). Furthermore, we show that finding optimal Morse matchings on triangulated $3$-manifolds is also fixed-parameter tractable in the treewidth of the dual graph of the triangulation (Theorem~\ref{thm:fpg}), which is a common parameter when working with triangulated $3$-manifolds~\cite{burton2012complexity}.

Finally, we use the classification of simplicial and generalized triangulations of $3$-manifolds to investigate the ``typical'' treewidth of the respective graphs for relevant instances of Morse type problems. In this way, we give further information on the relevance of the fixed parameter results. The experiments show that the average treewidths of the respective graphs of simplicial triangulations of $3$-manifolds are particularly small in the case of generalized triangulations. Furthermore, experimental data suggest a much more restrictive connection between the treewidth of the dual graph and the spine of triangulated $3$-manifolds than the one stated in Theorem \ref{thm:fpg}.


\section{Preliminaries}\label{sec:prelims}

\paragraph*{Triangulations}
Throughout this paper we mostly consider simplicial complexes of dimensions $2$ and $3$, although most of our results hold for more general combinatorial structures. All $2$-dimensional simplicial complexes we consider are (i) pure, i.e., all maximal simplexes are triangles ($2$-simplexes) and (ii) {\em strongly connected}, i.e., each pair of triangles is connected by a path of triangles such that any two consecutive triangles are joined by an edge ($1$-simplex). All $3$-dimensional simplicial complexes we consider are triangulations of closed $3$-manifolds, that is, simplicial complexes whose underlying topological space is a closed $3$-manifold. In particular every $3$-manifold can be represented in this way~\cite{moise1952}. We will refer to these objects as {\em simplicial triangulations of $3$-manifolds}.

In Section \ref{sec:experimental} we briefly concentrate on a slightly more general notion of a \emph{generalized triangulation of a $3$-manifold}, which is a collection of tetrahedra all of whose faces are affinely identified or ``glued together'' such that the underlying topological space is a $3$-manifold. Generalized triangulations use far fewer tetrahedra than simplicial complexes, which makes them important in computational $3$-manifold topology (where many algorithms are exponential time). Every simplicial triangulation is a generalized triangulation, and the second barycentric subdivision of a generalized triangulation is a simplicial triangulation~\cite{moise1952}, hence both objects are closely related.

For the remainder of this article, we will often consider $2$-dimensional simplicial complexes as part of a simplicial triangulation of a $3$-manifold.

\paragraph*{Erasability of simplicial complexes}
Let $\Delta$ be a $2$-di\-men\-sion\-al sim\-pli\-cial complex.
A triangle $t \in \Delta$ is called \textit{external} if $t$ has at least one edge which is not in the boundary of any other triangle in $\Delta$; otherwise $t$ is called \textit{internal}. Given a $2$-dimensional simplicial complex $\Delta$ and a triangle $t \in \Delta$, the $2$-dimensional simplicial complex obtained by removing (or {\em erasing}) $t$ from $\Delta$ is denoted by $\Delta \setminus t$. In addition, if $\Delta'$ is obtained from $\Delta$ by iteratively erasing triangles such that in each step the erased triangle is external in the respective complex, we will write $\Delta \rightsquigarrow \Delta'$. We say that the complex $\Delta$ is \textit{erasable} if $\Delta \rightsquigarrow \emptyset$, where in this context $\emptyset$ denotes a complex with no triangle.
Finally, for every $2$-dimensional simplicial complex $\Delta$ we define $\operatorname{er}(\Delta)$ to be the size of the smallest subset $\Delta_0$ of triangles of $\Delta$ such that $\Delta \setminus \Delta_0 \rightsquigarrow \emptyset$. The elements of $\Delta_0$ are called {\em critical} triangles and hence $\operatorname{er}(\Delta)$ is sometimes also referred to as the {\em minimum number of critical triangles} of $\Delta$.
Determining $\operatorname{er}(\Delta)$ is known as the erasability problem~\cite{eugeciouglu1996computationally}.

\begin{problem}[Erasability]
{A $2$-dimensional simplicial complex $\Delta$.}
{A non-negative integer $k$.}
{Is $\operatorname{er}(\Delta)\le k$?}
\end{problem}

\paragraph*{Hasse diagram and spine}
Given a simplicial complex $\Delta$, one defines its {\em Hasse diagram} $H$ to be a directed graph in which the set of nodes of $H$ is the set of simplexes of $\Delta$, and an arc goes from $\tau$ to $\sigma$ if and only if $\sigma$ is contained in $\tau$ and $ \dim (\sigma)+1 = \dim (\tau)$. Let $H_i \subseteq H$ be the bipartite subgraph spanned by all nodes of $H$ corresponding to $i$- and $i+1$-dimensional simplexes. In particular, $H_1$ describes the adjacency between the $2$-simplexes and $1$-simplexes of $\Delta$, and will be called the {\em spine} of the simplicial complex $\Delta$. The spine of a simplicial complex will be one of the main objects of study in this work. 

\paragraph*{Matchings}
By a {\em matching of a graph} $G = (N,A)$ we mean a subset of arcs $M \subset A$ such that every node of $N$ is contained in at most one arc in $M$. Arcs in $M$ are called {\em matched arcs} and the nodes of the matched arcs are called {\em matched nodes}. Nodes and arcs which are not matched are referred to as {\em unmatched}. By the {\em size} of a matching $M$ we mean the number of matched arcs. A matching $M$ is called a {\em maximum matching} of a graph $G$ if there is no matching with a larger size than the size of $M$.

\paragraph*{Morse matchings}
Let $H$ be the Hasse diagram of a simplicial complex $\Delta$ and $M$ be a matching on $H$. Let $H(M)$ be the directed graph obtained from the Hasse diagram by reversing the direction of each arc of the matching $M$. If $H(M)$ is a directed acyclic graph, i.e., $H(M)$ does not contain directed cycles, then $M$ is a {\em Morse matching}~\cite{chari2000decomposition}. Furthermore, the number $c_i$ of unmatched vertices representing $i$-simplexes of $\Delta$ is called the number of \emph{critical $i$-dimensional simplexes} and the sum $c(M) = \sum_i c_i$ is said to be the {\em total number of critical simplexes}.

The motivation to find optimal Morse matchings is given by the following fundamental theorem of discrete Morse theory due to Forman which deals with simple homotopy~\cite{Cohen1973}.

\begin{theorem}[\cite{forman1998morse}]\label{forman_theorem}
Let $M$ be a Morse matching on a simplicial complex $\Delta$. Then $\Delta$ is simple homotopy equivalent to a $CW$-complex with exactly one $d$-cell for each critical $d$-simplex of $M$.
\end{theorem}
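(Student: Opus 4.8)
The plan is to convert the combinatorial acyclicity condition on $H(M)$ into a sequence of elementary collapses and cell attachments, following the reformulation of Forman's gradient vector field in terms of acyclic matchings. The matched arcs of $M$ play the role of \emph{discrete gradient vectors}: a matched pair $\sigma \subset \tau$ with $\dim\sigma + 1 = \dim\tau$ records that $\sigma$ is to be paired with $\tau$, while the unmatched nodes are the critical simplexes whose cells must survive in the final $CW$-complex. A directed path in $H(M)$ that alternates between matched (upward) arcs and unmatched (downward) arcs is a \emph{$V$-path}, and the hypothesis that $H(M)$ is acyclic says precisely that there is no nontrivial closed $V$-path.

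First I would use acyclicity to produce a suitable linear order on the simplexes of $\Delta$. Since $H(M)$ is a directed acyclic graph it admits a topological sort, which I would refine into a filtration $\emptyset = \Delta_0 \subset \Delta_1 \subset \cdots \subset \Delta_N = \Delta$ by subcomplexes in which each step either (i) attaches a single critical $d$-simplex, or (ii) attaches a matched pair $(\sigma,\tau)$ in such a way that, at the moment $\tau$ is added, its partner $\sigma$ is a \emph{free face}, i.e.\ a codimension-one face of $\tau$ lying in no other simplex of $\Delta_{i+1}$. In case (ii) the inclusion $\Delta_i \hookrightarrow \Delta_{i+1}$ is an elementary expansion (the inverse of an elementary collapse), while in case (i) it is the attachment of a single cell.

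The crux is to guarantee that such an ordering exists — equivalently, that the matched pairs can always be collapsed in some order so that at each stage the lower cell of the pair being removed is free. This is where acyclicity does the work: I would argue that if, at some stage, no matched pair admitted a free lower face, one could trace a sequence of matched pairs, each linked to the next by an unmatched incidence, and finiteness would then force a repetition, producing a closed $V$-path in $H(M)$ and contradicting acyclicity. Making this precise — handling cells of several dimensions simultaneously and verifying that the critical cells can be interleaved without obstructing collapses — is the main technical obstacle, and is exactly the combinatorial content that replaces the analysis of the gradient flow in the smooth theory.

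Finally, I would assemble the conclusion. By Whitehead's theory of simple homotopy types, every elementary collapse, and hence every elementary expansion, is a simple homotopy equivalence, so all steps of type (ii) preserve the simple homotopy type. Collapsing away all matched pairs therefore leaves a $CW$-complex built solely from the attachments of type (i), namely one $d$-cell for each critical $d$-simplex of $M$, and this complex is simple homotopy equivalent to $\Delta$. This yields the stated theorem.
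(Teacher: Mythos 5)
The paper does not actually prove this statement: it is Forman's theorem, quoted with a citation to \cite{forman1998morse}, so there is no in-paper proof to compare against. Your sketch follows the standard proof of the acyclic-matching formulation (which is really Chari's reformulation \cite{chari2000decomposition}; Forman's own argument in the cited paper works with discrete Morse \emph{functions} and shows that level subcomplexes collapse across intervals containing no critical values). The ideas you use are the right ones: acyclicity of $H(M)$ yields an ordering in which matched pairs enter as elementary expansions and critical simplexes as single cell attachments, and expansions preserve simple homotopy type. For the step you flag as the main obstacle, the clean argument is to order not the simplexes but the quotient nodes (matched pairs and critical cells, with an arc whenever an unmatched codimension-one incidence runs from one node into another): an unmatched arc drops dimension by one, while passing through a node raises dimension by at most one, so around any directed cycle of the quotient every matched pair must be entered at its bottom cell and left at its top cell; such a cycle then lifts to a directed cycle in $H(M)$, which is exactly your ``repetition forces a closed $V$-path'' argument made precise.

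The one genuine loose end is your final paragraph. You cannot literally ``collapse away all matched pairs'' of $\Delta$ and be left with the critical cells: the critical simplexes do not form a subcomplex of $\Delta$, and the type (i) steps are not collapses, so the filtration does not exhibit a collapse of $\Delta$ onto anything inside it. The correct assembly is an induction along the filtration: if $\Delta_i$ is simple homotopy equivalent to a CW-complex $Y_i$ with one cell for each critical simplex added so far, then a type (ii) step gives $\Delta_{i+1}\searrow\Delta_i$, so one may take $Y_{i+1}=Y_i$; a type (i) step attaches a $d$-cell to $\Delta_i$ along its boundary, and one takes $Y_{i+1}$ to be $Y_i$ with a $d$-cell attached along the composite of that attaching map with the equivalence $\Delta_i\to Y_i$. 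That attached cells can be transported across simple homotopy equivalences in this way is a standard lemma of simple homotopy theory \cite{Cohen1973}. This is precisely where the modified attaching maps (the discrete analogue of the gradient flow) enter, and it is the reason the conclusion produces a CW-complex that is generally \emph{not} a subcomplex of $\Delta$. With that correction, your outline is a correct proof.
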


In other words, a Morse matching with the smallest number of critical simplexes gives us the most compact and succinct topological representation up to homotopy. For more information about the basic facts of Morse theory we refer the reader to Forman's original work~\cite{forman1998morse}.
This motivates a fundamental problem in discrete Morse theory,
\emph{optimal Morse matching}, as a decision problem in the following form.

\begin{problem}[Morse Matching]
{A simplicial complex $\Delta$.}
{A non-negative integer $k$.}
{Is there a Morse matching $M$ with $c(M) \leq k$?}
\end{problem}
\noindent
Note that \textsc{Erasability} can be restated as a version of \textsc{Morse Matching} where only the number of unmatched $2$-simplexes (that is, $c_2 (M)$) is counted~\cite{lewiner2003optimal}.

\paragraph*{Complexity of Morse matchings}
The compelexity of computing optimal Morse matchings is linear on 1-complexes (graphs)~\cite{forman1998morse} and
$2$-complexes that are manifolds~\cite{lewiner2003optimal}.
Joswig and Pfetsch~\cite{joswig2006computing} prove that if you can solve \textsc{Erasability} in the spine of a $2$-simplicial complex in polynomial time, then you can solve \textsc{Morse matching} in the \textit{entire} complex in polynomial time. The proof technique easily extends to $3$-manifolds, leading to the following lemma which has been mentioned in previous works \cite{lewiner2004apps, ayala2012perfect}.
\begin{lemma}
	\label{lem:morse_manifold} 
	Let $M$ be a Morse matching on a triangulated $3$-manifold $\Delta$. Then we can compute a Morse matching $M'$ in polynomial time which has exactly one critical $0$-simplex and one critical $3$-simplex, such that $c(M') \leq c(M)$.
\end{lemma}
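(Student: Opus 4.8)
The plan is to start from the given matching $M$ and make it optimal at the two extreme dimensions by repeatedly applying Forman's \emph{cancellation} of a critical $p$-simplex against a critical $(p+1)$-simplex, restricting entirely to the dimension pairs $(0,1)$ and $(2,3)$. The useful structural fact is that each simplex is incident to at most one matched arc, so a directed cycle in $H(M)$ can only alternate between two consecutive dimensions; hence acyclicity may be checked dimension-pair by dimension-pair, and any cancellation whose reversed gradient path lives solely in dimensions $0,1$ (respectively $2,3$) neither creates a cycle in, nor otherwise disturbs, the matching on the spine $H_1$. A single cancellation reverses a \emph{unique} gradient path from a critical $(p+1)$-cell to a critical $p$-cell, makes both cells matched, and decreases $c_p$ and $c_{p+1}$ each by one, so the total $c$ drops by exactly two while the Morse property is preserved. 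Since cancellations only decrease the total, any sequence of them gives $c(M')\le c(M)$ for free; it therefore suffices to reach $c_0=1$ and $c_3=1$. (Here I assume $\Delta$ connected, as the statement implicitly requires.)

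For the critical vertices, note that the $1$-cells matched to $0$-cells form, by acyclicity, a forest $F$ on the vertex set, and each tree of $F$ contains exactly one unmatched vertex; thus $c_0$ equals the number of trees of $F$, and $c_0\ge b_0=1$. Two trees can be merged, lowering $c_0$ by one, by cancelling against one of their roots a \emph{critical} edge joining them: such an edge has one endpoint in each tree, so the two gradient paths climbing to the respective roots are distinct and each is unique, giving the required uniqueness. Because the $1$-skeleton is connected the trees of $F$ are linked by some edges; if a linking edge is critical we merge directly. The only obstruction is when every edge joining two given trees is a spine edge (matched to a $2$-cell). I would then temporarily \emph{free} one such edge $b$ by unmatching it from its $2$-cell; this raises the total by two ($b$ and that $2$-cell become critical) but makes $b$ a critical edge admitting a unique gradient path to a root, against which it is immediately cancelled, lowering the total by two. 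The net effect is $c_0$ down by one, $c_2$ up by one, total unchanged, so $c(M')\le c(M)$ survives; iterating gives $c_0=1$.

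The top dimension is handled dually through the dual graph of the triangulation (tetrahedra as nodes, shared triangles as edges), which is connected since the manifold is connected. The triangles matched to $3$-cells orient a sub-forest $F^\ast$ of this graph in which every matched tetrahedron has a unique parent, so $c_3$ equals the number of trees of $F^\ast$, and $c_3\ge1$ (e.g.\ because $H_3(\Delta;\mathbb{Z}_2)=\mathbb{Z}_2$). Gradient movement between top cells is forced along the edges of $F^\ast$, so although a tetrahedron has several facets the path from a root to any fixed tetrahedron inside its tree is unique; hence a critical triangle $g$ shared by tetrahedra in two different trees is reached from exactly one root by exactly one gradient path, and cancelling it merges the trees and drops $c_3$ by one. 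The obstruction—two trees joined only by triangles matched to $1$-cells—is resolved by the analogous freeing move on such a spine triangle, again with net total change zero. Since these operations reverse gradient paths confined to dimensions $2,3$ (and freeing a spine triangle only turns a $1$-cell critical), they leave the previously achieved $c_0=1$ intact.

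Termination and efficiency are immediate: there are at most $O(n)$ cancellations, each amounting to locating and reversing one gradient path, so $M'$ is produced in polynomial time; the identity $c_0-c_1+c_2-c_3=\chi(\Delta)=0$ serves as a running consistency check on the bookkeeping. The main obstacle in turning this sketch into a full proof is precisely the existence-and-uniqueness of the gradient paths underpinning each cancellation: one must verify that a critical cell bridging two trees always yields a single gradient path to a chosen root (straightforward in dimensions $0,1$, and true in dimensions $2,3$ only because inter-cell movement is pinned to $F^\ast$ despite the higher branching), and that the ``stuck'' configurations are always escapable by the net-zero freeing move guaranteed by connectivity of the $1$-skeleton and of the dual graph.
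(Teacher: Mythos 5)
Your proposal is correct in its essentials, but it takes a genuinely different route from the paper. The paper (following Joswig--Pfetsch) first proves a connectivity lemma: the graph $\gamma(M)$ obtained from the $1$-skeleton by deleting all edges matched with triangles, and the dual graph $\gamma^\ast(M)$ obtained by deleting all triangles matched with edges, are both connected --- this is shown by a directed-path argument that derives a directed cycle in $H(M)$ from any hypothetical cut. It then discards the vertex--edge and triangle--tetrahedron parts of $M$ wholesale and replaces them by rooted spanning-tree matchings of $\gamma(M)$ and $\gamma^\ast(M)$, leaving the spine matching untouched. You instead keep $M$ and merge the trees of the induced forests $F$ and $F^\ast$ one at a time by Forman cancellations, falling back on a net-zero ``freeing'' move when two trees are bridged only by spine edges (resp.\ spine triangles). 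Both arguments exploit the same forest structure and the uniqueness of in-tree paths; the trade-off is instructive. The paper's global replacement never modifies the edge--triangle matching, so it preserves the number of critical simplexes in dimensions $1$ and $2$ exactly (the stronger statement quoted in the appendix) and needs no gradient-path uniqueness or cancellation machinery. Your cancellation route needs only connectivity of the $1$-skeleton and of the dual graph, but the freeing move trades a critical vertex (or tetrahedron) for a critical $2$-cell (or $1$-cell), so you obtain only $c(M') \le c(M)$ --- sufficient for the lemma as stated, but weaker. It is also worth noting that the paper's connectivity lemma implies your ``stuck'' configuration never actually arises: whenever there are at least two trees, connectivity of $\gamma(M)$ (resp.\ $\gamma^\ast(M)$) forces some \emph{critical} edge (resp.\ triangle) to join two distinct trees, since matched edges stay inside trees. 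Since you do not prove that lemma, handling the stuck case is necessary in your argument, and your handling of it is sound; the gradient-path uniqueness claims you flag as the main burden do hold, for exactly the tree-structure reasons you give.
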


In other words, answering \textsc{Erasability} on the spine is the only difficult part when solving \textsc{Morse Matching} on a $3$-manifold. In Section \ref{sec:treewidth} we show that if a spine has bounded treewidth, then we can solve \textsc{Erasability} in linear time.
Lemma \ref{lem:morse_manifold} therefore generalizes this result to \textsc{Morse Matching} on $3$-manifolds.
%
%


\section{W[P]-Completeness of the erasability problem}
\label{sec:WpCompleteness}

In order to prove that \textsc{Erasability} is $W[P]$-complete in the natural parameter, we first have to take a closer look at what has to be considered when proving hardness results with respect to a particular parameter.

\begin{definition}[Parameterized reduction]\label{def:parameterized_reduction} 
	A parameterized problem $L$ reduces to a parameterized problem $L'$, denoted by $L \leq_{FPT} L'$, if we can transform an instance $(x, k)$ of $L$ into an instance $(x',g(k))$ of $L'$ in time $f(k)|x|^{O(1)}$ (where $f$ and $g$ are arbitrary functions), such that $(x, k)$ is a yes-instance of $L$ if and only if $(x',g(k))$ is a yes-instance of $L'$.
\end{definition}

As an example, E{\u{g}}ecio{\u{g}}lu and Gonzalez~\cite{eugeciouglu1996computationally} reduce \textsc{Set Cover} to \textsc{Erasability} to show that \textsc{Erasability} is NP-complete. Since their reduction approach turns out to be a parameterized reduction, these results can be restated in the language of parameterized complexity as follows.

\begin{corollary}
	\textsc{Set Cover} $\leq_{FPT}$ \textsc{Erasability}, therefore \textsc{Erasability} is $W[2]$-hard.
\end{corollary}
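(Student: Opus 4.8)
The plan is to verify that the polynomial-time many-one reduction from \textsc{Set Cover} to \textsc{Erasability} constructed by E{\u{g}}ecio{\u{g}}lu and Gonzalez~\cite{eugeciouglu1996computationally} in fact satisfies the two additional requirements of a parameterized reduction in the sense of Definition~\ref{def:parameterized_reduction}, and then to invoke the known $W[2]$-completeness of \textsc{Set Cover} in the size of the cover. First I would recall their construction as a map $(U,\mathcal{S},k)\mapsto(\Delta,g(k))$, in which the universe elements and the sets $S_i\in\mathcal{S}$ are encoded by local gadgets of a $2$-dimensional simplicial complex $\Delta$, glued so that the only way to initiate the iterative erasing is to make critical one triangle per set chosen for the cover. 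The running time of this construction is polynomial in $|U|+|\mathcal{S}|$, which in particular meets the $f(k)\,|x|^{O(1)}$ time bound of a parameterized reduction (taking $f$ constant).

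The heart of the argument, and the step I expect to require the most care, is checking that the target parameter $g(k)$ is a function of $k$ \emph{alone} and does not depend on the instance size. A plain NP-hardness reduction only needs the erasability threshold to be computable in polynomial time, and many natural reductions would let this threshold grow with $|U|$ or $|\mathcal{S}|$, which would be fatal here. I would therefore re-examine the correctness proof of E{\u{g}}ecio{\u{g}}lu and Gonzalez and extract from it the quantitative statement that $\operatorname{er}(\Delta)\le k$ if and only if $U$ admits a set cover of size at most $k$, so that one may take $g(k)=k$ (or at worst $g(k)=ck$ for a fixed constant $c$ determined by the gadget design). Concretely, I would confirm that in a minimum erasing each critical triangle can be charged to a set in an optimal cover, and conversely that any cover of size $k$ yields an erasing with exactly $k$ critical triangles by first removing the triangles indexed by the chosen sets and then verifying that the remaining complex collapses under external-triangle removals.

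With these two facts in hand the construction is a parameterized reduction, \textsc{Set Cover} $\leq_{FPT}$ \textsc{Erasability}. Since \textsc{Set Cover} parameterized by the number of sets in the cover is $W[2]$-complete, and $W[2]$-hardness is preserved under parameterized reductions, it follows immediately that \textsc{Erasability} in the natural parameter is $W[2]$-hard. The only genuinely new work beyond~\cite{eugeciouglu1996computationally} is the bookkeeping establishing the $g(k)=O(k)$ bound on the target parameter; everything else is inherited from their correctness proof.
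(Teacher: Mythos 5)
Your proposal is correct and follows essentially the same route as the paper: the paper's entire justification for this corollary is the observation that the E{\u{g}}ecio{\u{g}}lu--Gonzalez NP-hardness reduction from \textsc{Set Cover} is already parameter-preserving (so it is a parameterized reduction), combined with the $W[2]$-completeness of \textsc{Set Cover}. The only difference is that you spell out the bookkeeping (verifying $g(k)=O(k)$) that the paper leaves implicit in the phrase ``their reduction approach turns out to be a parameterized reduction.''
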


This shows that, if the parameter $k$ is simultaneously bounded in both
problems, \textsc{Erasability} is \textit{at least as hard as}
\textsc{Set Cover}. In this section we will determine exactly how much
harder \textsc{Erasability} is than \textsc{Set Cover}, which is
$W[2]$-complete. Namely, we will show that \textsc{Erasability} is
$W[P]$-complete in the natural parameter $k$. This will be done by i)
using a $W[P]$-complete problem as an oracle to solve an arbitrary
instance of \textsc{Erasability} (Theorem \ref{thm:inWP}, which shows
that \textsc{Erasability} is in $W[P]$), and ii) reducing an arbitrary instance of a suitable problem which is known to be $W[P]$-complete to an instance of \textsc{Erasability} (Theorem \ref{thm:WPhard}, which shows that \textsc{Erasability} is $W[P]$-hard).

\medskip
There are only a few problems described in the literature which are known to be $W[P]$-complete \cite[p.~473]{downey1999parameterized}. Amongst these problems, the following is suitable for our purposes.

\begin{problem}[Minimum Axiom Set]
{A finite set $S$ of {\em sentences}, and an {\em implication relation} $R$ consisting of pairs $(U,s)$ where $U \subseteq S$ and $s \in S$.}
{A positive integer $k$.}
{Is there a set $S_0 \subseteq S$ (called an {\em axiom set}) with $|S_0| \leq k$ and a positive integer $n$, for which $S_n = S$, where we define $S_i$, $1 \leq i \leq n$, to consist of exactly those $s \in S$ for which either $s \in S_{i-1}$ or there exists a set $U \subseteq S_{i-1}$ such that $(U, s) \in R$?}
\end{problem}
\begin{theorem}[\cite{downey1994parameterized}]
	\textsc{Minimum Axiom Set} is $W[P]$-complete.
\end{theorem}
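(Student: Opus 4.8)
The plan is to establish the two halves of completeness separately: membership \textsc{Minimum Axiom Set} $\in W[P]$, and then $W[P]$-hardness by reduction from a canonical $W[P]$-complete problem. For membership I would invoke the bounded-nondeterminism characterization of $W[P]$: a parameterized problem lies in $W[P]$ exactly when it is decided by a nondeterministic algorithm running in FPT time that makes only $O(g(k)\cdot\log n)$ nondeterministic binary choices, for some computable $g$. Here $n=|S|$ and the natural witness is the axiom set itself. I would guess $S_0\subseteq S$ with $|S_0|\le k$ by nondeterministically writing down $k$ indices in $\{1,\dots,|S|\}$, at a cost of $k\lceil\log|S|\rceil = O(k\log n)$ nondeterministic bits. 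Given $S_0$, the remaining computation is deterministic and monotone: iterate $S_i = S_{i-1}\cup\{s : \exists\,U\subseteq S_{i-1},\ (U,s)\in R\}$, which stabilizes after at most $|S|$ rounds, each round scanning $R$ in polynomial time, and accept iff the fixed point equals $S$. This places the problem in $W[P]$; equivalently, one may compile the $|S|$ closure rounds into a polynomial-size monotone circuit whose weight-$k$ satisfying assignments are exactly the valid axiom sets, yielding a direct FPT-reduction to \textsc{Weighted Circuit Satisfiability}.

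For hardness I would reduce from \textsc{Weighted (Monotone) Circuit Satisfiability}, the canonical complete problem for $W[P]$, exploiting the fact that the implication relation $R$ is precisely a monotone AND/OR inference net. The skeleton of the reduction is to introduce one sentence $s_v$ per node $v$ of the circuit, turn each input into an axiom candidate, encode each AND-gate $g$ with inputs $u,w$ by the single rule $(\{s_u,s_w\},s_g)$, and each OR-gate by the pair of rules $(\{s_u\},s_g)$ and $(\{s_w\},s_g)$. Then the monotone closure of a chosen set of input sentences mirrors evaluating the circuit on the corresponding input assignment, the cardinality bound $k$ bounds the number of activated inputs, and the output sentence enters the closure exactly when the circuit is satisfied. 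The function $g(k)=k$ is trivially computable and the construction is clearly polynomial, so only correctness requires work.

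The hard part will be reconciling the requirement ``$S_n=S$'' (that \emph{every} sentence be derived) with ``the output gate fires,'' while keeping the bound $k$ a genuine control on the number of true inputs. Because the formalism is monotone, the all-true configuration is a fixed point that derives everything, so any naive mechanism for re-deriving the \emph{unchosen} (false) input sentences — needed so that the closure can ever reach all of $S$ — tends to create a cheap seed (e.g. a single internal or output sentence) that collapses the whole closure and destroys the weight constraint. The crux is therefore an enforcement gadget guaranteeing that (i) the only budget-efficient route to full closure is the genuine activation of a weight-$\le k$ satisfying input set, (ii) directly seeding internal or output sentences is never cheaper than $k$ (for instance by padding, or by routing the re-derivation of unchosen inputs through the conjunction of all gate sentences, which no small seed can produce without actually satisfying the circuit), and, if one reduces from general rather than monotone circuits, (iii) negation is simulated in the monotone inference system via a dual-rail encoding. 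Establishing that $\operatorname{er}$-style minimality — here, an axiom set of size $\le k$ — holds if and only if the circuit admits a weight-$\le k$ satisfying assignment, with no cheaper cheat available, is the delicate step; the surrounding bookkeeping is routine.
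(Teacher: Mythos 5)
First, a point of comparison: the paper does not prove this statement at all --- it is imported as a black box from Downey, Fellows, Kapron, Hallett and Wareham \cite{downey1994parameterized} --- so your proposal must be judged as a self-contained proof rather than against an in-paper argument. Its membership half is correct and essentially complete: guessing the axiom set with $k\lceil\log|S|\rceil$ nondeterministic bits and then deterministically iterating the monotone closure $S_{i-1}\mapsto S_i$ until it stabilizes is exactly the kind of argument that places a problem in $W[P]$ (via the bounded-nondeterminism characterization, or equivalently via your compilation of the at most $|S|$ closure rounds into a polynomial-size monotone circuit whose weight-$k$ satisfying assignments are the size-$k$ axiom sets; monotonicity of the closure --- any superset of an axiom set is an axiom set --- absorbs the ``$\le k$ versus exactly $k$'' mismatch).

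The hardness half, however, contains a genuine gap, and you have located it yourself: the whole content of the theorem is the enforcement gadget reconciling ``$S_n=S$'' with the weight bound, and your proposal explicitly defers precisely that step (``the delicate step''), so what remains is a plan rather than a proof. Worse, one of the two concrete mechanisms you float would fail: routing the re-derivation of unchosen inputs through ``the conjunction of all gate sentences'' breaks the \emph{forward} direction of the reduction, because under a genuine weight-$k$ satisfying assignment only the gates that evaluate to \emph{true} ever enter the closure --- internal gates evaluating to false are never derived --- so a collapse rule conditioned on all gate sentences never fires and full closure is never reached; for the same reason the false gate sentences themselves (not only the unchosen inputs) must be re-derivable by whatever collapse mechanism you use. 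The trigger has to be the output alone, protected against cheap seeding, e.g.\ by replacing the output and each gate by $k+1$ copy-sentences, letting each gate rule demand \emph{all} copies of its antecedents, and letting the collapse rules $(\{\mathrm{out}^1,\dots,\mathrm{out}^{k+1}\},s)$ require all $k+1$ output copies: since a seed of size $\le k$ cannot contain all copies of any sentence, at least one copy at every level must be honestly derived, and unwinding this forces a genuine weight-$\le k$ satisfying assignment. Finally, your item (iii) is not routine bookkeeping: naive dual-rail fails outright in this monotone inference setting, because the negative literal of an unchosen input has no derivation available before the collapse; one must either reduce from weighted \emph{monotone} circuit satisfiability (itself $W[P]$-complete) or use a slot-selection encoding under which negated inputs become monotone expressions in the selection variables. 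Until the anti-cheat analysis is actually carried out along these lines, $W[P]$-hardness has not been established.
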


In this paper, we show that, preserving the natural parameter $k$, \textsc{Minimum Axiom Set} is both at least and at most as hard as \textsc{Erasability}.
\begin{theorem}\label{thm:inWP}
	\textsc{Erasability} $\leq_{FPT}$ \textsc{Minimum Axiom Set}, therefore \textsc{Erasability} is in $W[P]$.
\end{theorem}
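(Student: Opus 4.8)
The plan is to exhibit a parameterized reduction (indeed a polynomial-time reduction that leaves the parameter untouched, so $g(k)=k$) from \textsc{Erasability} to \textsc{Minimum Axiom Set}; since \textsc{Minimum Axiom Set} lies in $W[P]$ and $W[P]$ is closed under $\leq_{FPT}$, this places \textsc{Erasability} in $W[P]$. The guiding observation is that the two problems share the same shape: pick a small seed (critical triangles, resp.\ axioms), then run a monotone iterative closure (erasing external triangles, resp.\ firing implications), and ask whether the closure covers everything. So I would read off the implications directly from the combinatorics of ``becoming external''.

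Concretely, given a $2$-dimensional simplicial complex $\Delta$ I would set $S = \{s_t : t \text{ a triangle of } \Delta\}$, one sentence per triangle. For each triangle $t$ and each of its three edges $e$, let $U^t_e = \{s_{t'} : t' \neq t,\ e \subset t'\}$ be the sentences of the other triangles containing $e$, and put $(U^t_e, s_t) \in R$. This encodes the rule ``$t$ can be erased through $e$ exactly once every other triangle on $e$ is gone'': when $e$ is already a free edge of $\Delta$ we get $U^t_e = \emptyset$, correctly marking $t$ as external from the start. The parameter is carried over verbatim. The instance has one sentence per triangle and at most three implications per triangle, so it is built in polynomial time, which in particular meets the FPT time bound.

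For correctness I would argue both implications. If $\operatorname{er}(\Delta) \le k$, take a witnessing set $\Delta_0$ of critical triangles and set $S_0 = \{s_t : t \in \Delta_0\}$; the external-erasing order of $\Delta \setminus \Delta_0$ yields a derivation of all of $S$, because the edge witnessing that a triangle is external at the moment of erasing is exactly the edge whose implication can then fire. Conversely, an axiom set $S_0$ with $|S_0| \le k$ and $S_n = S$ gives $\Delta_0 = \{t : s_t \in S_0\}$; linearizing the closure by the step $i$ at which each $s_t$ is first derived produces an erasing order, and since the set $U^t_e \subseteq S_{i-1}$ that fires $s_t$ consists of triangles derived strictly earlier, all of them (and the critical ones) are already removed when $t$ is reached, so $t$ is genuinely external.

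The reduction itself is clean; the part that needs care is the translation between the \emph{parallel} closure of \textsc{Minimum Axiom Set}, which may add several sentences in a single step, and the \emph{sequential}, one-triangle-at-a-time erasing of \textsc{Erasability}. The key point to nail down is that an order respecting the first-derivation step is a legal erasing order: a triangle's prerequisites along its firing edge lie in an \emph{earlier} level $S_{i-1}$, never the same level, so they strictly precede it and the required edge is free precisely when we erase. Verifying this strict precedence gives the exact correspondence between the minimum number of critical triangles and the minimum axiom set.
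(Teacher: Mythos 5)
Your reduction is exactly the paper's: one sentence per triangle, and for each edge $e$ and each triangle $t$ containing $e$ the implication $(S_e \setminus \{s_t\},\, s_t)$, with the parameter carried over unchanged and the same translation between the parallel closure of \textsc{Minimum Axiom Set} and the sequential erasing order (using that all prerequisites of a sentence derived at step $i$ lie in the strictly earlier level $S_{i-1}$). The only cosmetic difference is that the paper preprocesses $\Delta$ by erasing triangles with free edges so that every antecedent set is nonempty, whereas you allow $U^t_e = \emptyset$ (permitted by the problem statement, since $\emptyset \subseteq S_{i-1}$ always fires) to encode initially external triangles; both handle this point correctly.
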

Theorem \ref{thm:inWP} shows that \textsc{Erasability} is at most as hard as the hardest problems in $W[P]$. Please refer to the full version of this paper for a detailed proof of Theorem~\ref{thm:inWP}. 

In order to show that it is in fact amongst the hardest problems in this class we first need to build some gadgets.
\begin{definition}[Gadgets for the hardness proof of \textsc{Erasability}]\label{def:implication_gadget}
	Let $(S,R,k)$ be an instance of \textsc{Minimum Axiom Set}. 

	Let $s \in S$ be a sentence. By an {\em $s$-gadget} or {\em sentence gadget} we mean a triangulated $2$-dimensional sphere with $2n + m$ punctures as shown in \figref{gadget_part}, where $m$ is the number of relations $(U,s) \in R$ and $n$ is the number of relations $(U,u) \in R$ such that $s \in U$.	
 
	Let $(U,s) \in R$ be a relation. A {\em $(U,s)$-gadget} or {\em implication gadget} is a collection of $|U|+1$ sentence gadgets for each sentence of $U\cup\{s\}$ together with $2|U|$ nested tubes as shown in \figref{gadget} such that (i) two tubes are attached to two punctures of the $u$-gadget for each $u \in U$ and (ii) all $2|U|$ boundary components at the other side of the tubes are identified at a single puncture of the $s$-gadget.
\end{definition}

	\singleimage{Example of a sentence gadget with $m=2$ relations $(U,s)$ and $n=3$ relations $(U,u)$ with additional tubes.}{.8}{gadget_part}
	\singleimage{Example of a $(U,s)$-gadget with $U=\{a,b,c\}$, with sentence gadgets $\{a,b,c,s\}$.}{.6}{gadget}

\singleimage{Constructing an instance of \textsc{Erasability} from an instance $(S,R)$ of \textsc{Minimum Axiom Set}, where $S=\{a,b,c,d,e,f,g,h,i\}$ and $R=\{(\{c,d,e\},i),(\{f,g,h\},i), (\{b\},c),(\{a,d\},g)\}$.}{.7}{gadget_construction}

Then, by construction the following holds for the $(U,s)$-gadget.
\begin{lemma} \label{lem:implication_gadget} 
	A $(U,s)$-gadget can be erased if and only if all sentence gadgets corresponding to sentences in $U$ are already erased.
\end{lemma}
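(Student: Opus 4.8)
The plan is to prove both directions by tracking how the erasing process interacts with the boundary circles (punctures) of the individual pieces, using a single elementary peeling fact as a black box. That fact has two halves, both immediate from the definition of external triangle: (a) a triangulated punctured sphere, and likewise a triangulated tube (annulus), can be completely erased as soon as at least one of its boundary circles is \emph{free} — that is, carries an edge lying in no other triangle — and this remains true even if its other boundary circles are sealed by being glued to neighbouring pieces, since removing a boundary triangle always exposes fresh free edges and so the peeling front keeps advancing; and (b) conversely, if every edge of a subcomplex lies in at least two triangles (in particular if all of its boundary circles are sealed), then it contains no external triangle and the erasing process cannot even begin there, for the same reason that a closed surface has $\operatorname{er} > 0$. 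I will record (a) and (b) first, as they carry essentially all of the topological content.

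For the ``if'' direction I assume that every $u$-gadget with $u \in U$ has already been erased. Then each of the $2|U|$ tubes now has a free boundary circle at its former attachment puncture on the corresponding $u$-gadget, so by (a) every tube can be erased outright: the peeling front runs from the free end down to the common puncture on the $s$-gadget, the central-circle edges never obstructing removal because the side edges of each front triangle are free. Once all $2|U|$ tubes are gone, the single common puncture of the $s$-gadget is no longer sealed, so the $s$-gadget itself has a free boundary circle and, again by (a), can be erased completely despite its remaining punctures still being capped by adjacent pieces. Hence the entire $(U,s)$-gadget is erased.

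For the ``only if'' direction I argue by contrapositive, and this is where identifying all $2|U|$ tube ends at one puncture of the $s$-gadget is essential. Suppose some $u_0 \in U$ has not been fully erased, so the two tubes glued to $u_0$ still have their $u_0$-end sealed. Their opposite ends are identified, together with all surviving tube ends, at the single common puncture of the $s$-gadget; the edges on that puncture therefore lie in at least three triangles (the $s$-gadget sheet plus at least the two $u_0$-tube sheets) and so stay internal. Consequently every edge of those two tubes lies in at least two triangles, so by (b) they contain no external triangle and can never be removed, and the central puncture of the $s$-gadget remains sealed. Thus the $s$-gadget triangles surrounding that puncture stay internal, and neither the $s$-gadget nor the $(U,s)$-gadget can be completely erased. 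I expect the delicate point to be exactly this degree bookkeeping at the common puncture: one must verify that the survival of a \emph{single} hypothesis gadget keeps at least one tube sheet incident to every edge of that puncture, holding its degree at two or more, which is precisely what forces all of $U$ — rather than merely part of it — to be erased before the conclusion gadget can be.
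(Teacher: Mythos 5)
Your argument is correct and is essentially the paper's own proof: the ``if'' direction erases the tubes one by one once the hypothesis gadgets are gone (and then the $s$-gadget through the freed common puncture), while the ``only if'' direction observes that a surviving $u_0$-gadget together with its two tubes---whose far ends seal each other at the identified puncture of the $s$-gadget---forms a complex with no external triangle, so erasing can never start there. Your version merely makes explicit the edge-degree bookkeeping at the common puncture that the paper's two-sentence proof leaves implicit.
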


\begin{proof}
	Clearly, if all sentence gadgets corresponding to sentences in $U$ are erased, the whole gadget can be erased tube by tube. If, on the other hand, one of the sentence gadgets still exists, this gadget together with the two tubes connected to it build a complex without external triangles which thus cannot be erased. 
\end{proof}

With these tools in mind we can now prove the main theorem of this section.
\begin{theorem}\label{thm:WPhard}
	\textsc{Minimum Axiom Set} $\leq_{FPT}$ \textsc{Erasability}, even when the instance of \textsc{Erasability} is a strongly connected pure 2-dimensional simplicial complex $\Delta$ which is embeddable in $\mathbb{R}^3$.  Therefore \textsc{Erasability} is $W[P]$-hard.
\end{theorem}

The simplicial complex $\Delta$ (\figref{gadget_construction}) constructed to prove $W[P]$-hardness of \textsc{Erasability} is in fact embeddable into $\mathbb{R}^3$. This means that, even in the relatively well behaved class of embeddable $2$-dimensional simplicial complexes, \textsc{Erasability} when bounding the number of critical simplexes is still likely to be inherently difficult. Please refer to the full version of this paper for a detailed proof of Theorem~\ref{thm:WPhard}.
%
The $W[P]$-completeness result implies that if \textsc{Erasability} turns out to be fixed parameter tractable, then $W[P]=FPT$, i.e., every problem in $W[P]$ including the ones lower in the hierarchy would turn out to be fixed parameter tractable, an unlikely and unexpected collapse in parameterized complexity. Also, it would imply that the $n$-variable SAT problem can be solved in time $2^{o(n)}$, that is, better than in a brute force search \cite{abrahamson1995fixed}. With respect to this result, if we want to prove fixed parameter tractability of \textsc{Erasability}, the parameter must be different from the natural parameter.


\section{Fixed parameter tractability in the treewidth} \label{sec:treewidth}

In this section, we prove that there is still hope to find an efficient algorithm to solve \textsc{Morse Matching}. We give positive results for the field of discrete Morse theory by proving that \textsc{Erasability} and \textsc{Morse Matching} are fixed parameter tractable in the treewidth of the spine of the input simplicial complex, and also in the dual graph of the problem instance in case it is a simplicial triangulation of a $3$-manifold.


\tsubimages{Example of a nice tree decomposition (left) of the spine of a non-manifold $2$-dimensional simplicial complex (right).}{decomposition}{
  \vsubimage{.3}{big_nice_tree_decomposition}
  \vsubimage{.3}{decomposition3d}
}

\subsection{Treewidth}
\label{sec:tw}

\begin{definition}[Treewidth]
\label{def:treewidth} 
A tree decomposition of a graph $G$ is a tree $T$ together with a collection of bags $\{X_i\}$, where $i$ is a node of $T$. Each bag $X_i$ is a subset of nodes of $G$, and we require that (i)
every node of $G$ is contained in at least one bag $X_i$ ({\em node coverage}); (ii) for each arc of $G$, some bag $X_i$ contains both its endpoints ({\em arc coverage}); and for all bags $X_i$, $X_j$ and $X_k$ of $T$, if $X_j$ lies on the unique simple path from $X_i$ to $X_k$ in $T$, then $X_i \cap X_k \subseteq X_j$ ({\em coherence}). 

The \textit{width} of a tree decomposition is defined as $\max |X_i|-1$, and the \textit{treewidth} of $G$ is the minimum width over all tree decompositions. We will denote the \textit{treewidth} of $G$ by $\mathbf{tw}(G)$.
\end{definition}

For bounded treewidth, computing a tree decomposition of a graph
$G=(V,E)$ of width $\leq k$ has running time $O(f(k) |V|)$~\cite{bodlaender1993treedcomposition} due to an algorithm by
Bodlaender. Regarding the size of $f(k)$: using the improved algorithm
by Perkovi\'c and Reed~\cite{Perkovic99TreeDecAlgo}, at most $O(k^2)$
recursive calls of Bodlaender's improved linear time fixed-parameter
tractable algorithm for bounded treewidth from
\cite{Bodlaender96EfficientTreeDecAlgo} are needed.
This latter algorithm in turn is said
to have a constant factor $f(k)$ which is ``at most singly exponential
in $k$''. For further reading on the running times of tree decomposition
algorithms see
\cite{Bodlaender05DiscoveringTreewidth,kloks1994treewidth}.

\begin{definition}[Nice tree decomposition]
\label{def:nice_treewidth} 
A tree decomposition $({X_i \, | \, i \in I}, T)$ is called a {\em nice tree decomposition} if the following conditions are satisfied:
	\begin{enumerate}
	\item There is a fixed bag $X_r$ with $|X_r|=1$ acting as the root of $T$ (in this case $X_r$ is called the {\em root bag}).
	\item If bag $X_j$ has no children, then $|X_j|=1$ (in this case $X_j$ is called a {\em leaf bag}).
	\item Every bag of the tree $T$ has at most two children.
	\item If a bag $X_i$ has two children $X_j$ and $X_k$, then $X_i = X_j = X_k$ (in this case $X_i$ is called a {\em join bag}).
	\item If a bag $i$ has one child $j$, then either
		\begin{enumerate}
			\item $|X_i| = |X_j| + 1$ and $X_j \subset X_i$ (in this case $X_i$ is called an {\em introduce bag}), or
			\item $|X_j| = |X_i| + 1$ and $X_i \subset X_j$ (in this case $X_i$ is called a {\em forget bag}).
		\end{enumerate}
	\end{enumerate}
\end{definition}
	
A given tree decomposition can be transformed into a nice tree decomposition (\figref{decomposition}) in linear time:

\begin{lemma}[\cite{kloks1994treewidth}]
	Given a tree decomposition of a graph $G$ of width $w$ and $O(n)$ bags, where $n$ is the number of nodes of $G$, we can find a nice tree decomposition of $G$ that also has width $w$ and $O(n)$ bags in time $O(n)$.
\end{lemma}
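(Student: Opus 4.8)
The plan is to transform the given tree decomposition into a nice one by a sequence of purely local modifications, each preserving the three defining properties of a tree decomposition (node coverage, arc coverage, coherence) and never enlarging a bag beyond size $w+1$. First I would root the tree arbitrarily and process it so that every newly created bag is either an exact copy of an existing bag or differs from its neighbour by a single element. Throughout, I would track the cost of each operation so that the final bag count and running time stay linear in $n$ for fixed width.

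First I would \emph{binarize} the tree: whenever a bag $X_i$ has $d > 2$ children, I replace the single parent by a path of $d-1$ copies of $X_i$ and attach one original child to each copy. Since all copies carry the identical set $X_i$, coverage and coherence are immediate, and the total number of copies introduced over the whole tree is bounded by the number of arcs of $T$, hence $O(n)$. Next, for every bag with exactly two children I insert one copy of the bag on each of the two child-arcs, so that the two-child bag and both its children now hold the same set; this turns it into a legal join bag and costs only $O(1)$ per such bag.

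Then, along every arc joining a parent $X_p$ to a child $X_c$ with $X_p \neq X_c$, I would interpolate a chain of introduce and forget bags: starting from $X_c$ I forget the elements of $X_c \setminus X_p$ one at a time, reaching $X_c \cap X_p$, and then introduce the elements of $X_p \setminus X_c$ one at a time until I reach $X_p$. Consecutive bags then differ in exactly one element, and each chain has length $|X_c \setminus X_p| + |X_p \setminus X_c| \le 2(w+1)$. Finally I fix the endpoints: above the root I append a chain of forget bags until a single-element root remains, and below each leaf whose bag exceeds size one I append a chain of introduce bags descending to a single-element leaf. Every step touches each arc a constant number of times and adds at most $O(w)$ bags per arc, so the construction runs in time $O(w\cdot n)=O(n)$ and yields $O(n)$ bags, all of width $w$, for fixed width.

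The main obstacle is bookkeeping rather than conceptual: I must verify that every local insertion preserves coherence. The cleanest route is the equivalent formulation that a tree decomposition is coherent if and only if, for every node $v$ of $G$, the bags containing $v$ induce a connected subtree of $T$. Each operation either duplicates a bag (binarization, join-splitting), which merely extends such a subtree along a path of identical copies, or subdivides an arc by a chain in which every element of $X_p \cap X_c$ survives in all intermediate bags while an element of the symmetric difference occupies a contiguous segment of the chain adjacent to the endpoint that already contained it. In each case the per-node subtree remains connected, so coherence is inherited from the original decomposition. Checking this invariant uniformly across the binarization, join-splitting, and interpolation steps, while simultaneously keeping the bag count linear, is the part that requires care.
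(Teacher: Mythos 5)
Your proposal is correct. Note that the paper itself gives no proof of this lemma --- it is cited directly from Kloks' book --- and your construction (root, binarize high-degree bags with paths of copies, duplicate bags below join nodes, interpolate forget-then-introduce chains along arcs, and pad the root and leaves down to singletons, verifying coherence via the connected-subtree characterization) is precisely the standard argument from that source, with the key details handled correctly: forgetting before introducing keeps every intermediate bag inside $X_c$ or $X_p$ so the width never grows, and the $O(w)$ bags added per arc keep the total at $O(n)$ for fixed width.
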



\subsection{Alternating cycle-free matchings}
\label{ssec:acfm}

Given a graph $G = (N,A)$ and a matching $M \subset A$ on $G$, an \emph{alternating path} is a sequence of pairwise adjacent arcs such that each matched arc in the sequence is followed by an unmatched arc and conversely. An \emph{alternating cycle} of $M$ is a closed alternating path. Matchings which do not have any such alternating cycle are called \emph{alternating cycle-free matchings}. From the definition of Morse matching, we can state \textsc{Erasability} in the language of alternating cycle-free matchings as follows:

\begin{problem}[Alternating cycle-free matching]
{A bipartite graph $G=(N_1 \cup N_2,A)$.}
{A nonnegative integer $k$.}
{Does $G$ has an alternating cycle-free matching $M$ with at most $k$ unmatched nodes in $N_1$?}
\end{problem}

Specifically, if $G=H_1$ is the spine for some simplicial complex $\Delta$, then \textsc{Erasability} is equivalent to the \textsc{Alternating cycle-free matching} problem.


\subsection{FPT algorithm for the {alternating cycle-free matching} problem}

Courcelle's theorem~\cite{courcelle1990monadic} can be used to show that \textsc{Alternating cycle-free matching} is fixed parameter tractable (please refer to the full version of this paper). However, this is a purely theoretical result, since the stated complexity contains towers of exponents in the parameter function. This is the reason why, for the remainder of this section, we focus on the construction of a linear time algorithm to solve \textsc{Alternating cycle-free matching} for inputs of bounded treewidth with a significantly faster running time.

\begin{theorem}
	\label{thm:treewidth}
	Let $G = (N_1 \cup N_2, A)$ be a simple bipartite graph with a given nice tree decomposition $({X_i \, | \, i \in I}, T)$. Then the size of a maximum alternating cycle-free matching of $G$ can be computed in
$O( 4^{w^2+w} \cdot w^3 \cdot \log(w) \cdot n)$ time, where $n=|N|$ and $w$ denotes the width of the tree decomposition.
\end{theorem}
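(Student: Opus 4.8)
The plan is to design a dynamic programming algorithm over the nodes of the given nice tree decomposition, processing bags from the leaves toward the root. At each bag $X_i$ we maintain a table of \emph{partial states} describing every way an optimal partial matching could interact with the nodes currently in the bag. A state must record enough information to (a) enforce that $M$ is a valid matching, (b) detect and forbid alternating cycles, and (c) count unmatched $N_1$-nodes that have already been ``forgotten'' and can no longer change. For the matching constraint alone it suffices to record, for each node in $X_i$, whether it is currently unmatched or matched, and if matched whether its partner lies inside or outside the bag. The genuinely delicate ingredient is ruling out alternating cycles: since an alternating cycle can enter and leave the current bag several times, I would track, for each pair of matched/unmatched endpoints in the bag, the \emph{connectivity of alternating paths} among the bag nodes through the already-processed part of the graph. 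Concretely I would store a partition (or equivalence-type relation) on the bag nodes recording which nodes are joined by an alternating path using only arcs already introduced, together with the matched/unmatched status and the internal/external flag of each endpoint.

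First I would define the state space precisely and bound its size. With $w+1$ nodes per bag, each node carrying a bounded status flag and the endpoints being grouped by an auxiliary relation, the number of states per bag is at most roughly $4^{w^2+w}$, which is exactly the factor appearing in the claimed running time; the $w^2$ in the exponent comes from recording pairwise alternating-path relations among the $O(w)$ bag nodes. Next I would specify the transition rules for each of the four bag types of a nice tree decomposition. For a \textbf{leaf bag} the table is initialized directly. For an \textbf{introduce bag} I add one node together with its incident arcs to nodes already in the bag, updating the matching status and recomputing the alternating-path relation; a new arc may merge two alternating-path classes or, crucially, close an alternating cycle, in which case that state is discarded. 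For a \textbf{forget bag} I drop a node, which is where I verify that no alternating path is left ``dangling'' in a way that could later create a cycle, and where I finalize the contribution of a forgotten unmatched $N_1$-node to the count of unmatched nodes. For a \textbf{join bag}, where $X_i=X_j=X_k$, I combine the tables of the two children by merging compatible states, taking the union of the two alternating-path relations and rejecting any combination whose merged relation contains a cycle.

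I would prove correctness by a bottom-up induction: for every node $i$ of $T$ and every reachable state $\sigma$ at $X_i$, the table entry equals the minimum number of already-forgotten unmatched $N_1$-nodes over all alternating-cycle-free partial matchings of the subgraph induced by the nodes in the subtree rooted at $i$ that are consistent with $\sigma$. The base case is the leaf bag, and the inductive step is one lemma per bag type establishing that the transition rules correctly update the state invariant; the coherence property of the tree decomposition is what guarantees that a forgotten node never reappears, so its status is safely finalized. Reading the answer off the root bag then gives the minimum number of unmatched $N_1$-nodes, and hence $n_1$ minus that value is the size of a maximum alternating cycle-free matching, with $n_1=|N_1|$.

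The main obstacle will be the alternating-cycle bookkeeping across join bags. Because an alternating cycle is a global object whereas the tree decomposition only exposes local $O(w)$-sized windows, I must argue that recording pairwise alternating-path connectivity among bag nodes is both \emph{necessary} and \emph{sufficient}: sufficient because any alternating cycle must cross the separator represented by some bag, so it is detected when the relevant classes are merged; and the subtle direction is showing that merging two partial relations at a join bag cannot spuriously forbid a legal matching nor permit an illegal one. I would also have to be careful that an alternating path's two endpoints carry the correct parity (matched versus unmatched arc at each end) so that concatenation at a shared node genuinely yields an alternating walk. Once the state correctly encodes this relation, the running-time bound follows by charging $O(4^{w^2+w}\cdot w^3\cdot\log w)$ work per bag (the $w^3\log w$ accounting for manipulating the relations and comparing the $O(4^{w^2+w})$ states with an appropriately sorted or hashed table) and multiplying by the $O(n)$ bags.
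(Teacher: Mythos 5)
Your proposal takes essentially the same route as the paper's proof: a bottom-up dynamic program over the nice tree decomposition whose per-bag states record the matched/unmatched status of bag nodes together with alternating-path connectivity restricted to the bag (the paper's binary vector $\mathbf{v}$ and \textsf{union-find} structure $\mathbf{uf}$), grouped into equivalence classes each represented by a matching minimizing the number of forgotten unmatched nodes, with transitions at leaf/introduce/forget/join bags and the join bag dominating the $O(4^{w^2+w}\cdot w^3\cdot\log w)$ per-bag cost. The correctness argument via coherence (forgotten nodes never reappear, so grouping is safe) and the state-count analysis giving the $4^{w^2+w}$ factor also match the paper's.
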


\paragraph*{Algorithm overview}
Our algorithm constructs alternating cycle-free matchings of $G$ along the nice tree decomposition $({X_i \, | \, i \in I}, T)$ of $G$, from the leaves up to the root, visiting each bag exactly once. In the following we will denote by $F_i$, the set of nodes which are already processed and forgotten by the time $X_i$ is reached; we call this the {\em set of forgotten nodes}. At each bag $X_i$ of the decomposition, we construct a set $\Mmatchset{i}$ representing {\em all} valid alternating cycle-free matchings in the graph induced by the nodes in $X_i \cup F_i$. 

The leaf bags contain a single node of $G$, and the only matching is thus empty. At each introduce bag $X_i=X_j\cup\{x\}$, each matching $\Mmatch$ of $\Mmatchset{j}$ can be extended to several matchings as follows. The newly introduced node $x$ can be either left unmatched, or matched with one of its neighbors as long as it generates a valid and cycle-free matching with $\Mmatch$. At each join bag $X_i=X_j=X_k$, $\Mmatchset{i}$ is build from the valid combinations of pairs of matchings from $\Mmatchset{j}$ and $\Mmatchset{k}$. The final list of valid matchings is then evaluated at the root bag $r$.

However, this final list $\Mmatchset{r}$ contains an exponential number of matchings. Fortunately, the nice tree decomposition allows us to group together, at each step, all matchings $\Mmatch$ that coincide on the nodes of $X_i$. Indeed, the algorithm takes the same decisions for all the matchings of the group. We can thus store and process a much smaller list $\Mmatchset{j}$ of matchings containing only one representative $\Mmatchr$ of each group. In each group, we choose one with the smallest number of unmatched nodes so far. This grouping takes place at the forget and join bags. This makes the algorithm exponential in the bag size, not the input size.
The algorithm is described step-by-step and illustrated in the full version of this paper.

\paragraph*{Matching data structure}
The structure storing an alternating cycle-free matching $\Mmatch$ in a set $\Mmatchset{i}$ must be suitable for checking the matching validity whenever a matching is extended at an introduce bag or a join bag. It must store which nodes are already matched in $\Mmatch$ to avoid matching a node of $G$ twice ({\em matching condition}). We use a binary vector $\mathbf{v}(\Mmatch)$, where the $x$-th coordinate is $1$ if node $x \in X_i$ is matched and $0$ otherwise. Checking the matching condition and updating when nodes are matched has thus a constant execution time $O(1)$.

Also, the structure must store which nodes are connected by an alternating path in $\Mmatch$ to avoid closing a cycle when extending or combining $\Mmatch$ ({\em cycle-free condition}). 
When matching two nodes $x$ and $y$, an alternating cycle is created if there exists an alternating path from a neighbor of $x$ to a neighbor of $y$. 
To test this, we use a \funname{union-find} structure~\cite{Tarjan1975unionfind} $\mathbf{uf}(\Mmatch)$, storing for each matched node $x$ the index of a matched node $c(x)$ connected to $x$ by an alternating path in $\Mmatch$. For a subset of matched nodes which are all connected to each other, the component index $c$ is chosen to be the node with the lowest index.
For each unmatched node, we store the ordered list of component indexes of neighbor matched nodes.
The cycle-free condition check reduces to \funname{find} calls on the
adjacent lists, and the update of the structure when increasing the
matching size reduces to \funname{union} calls, both executing in
near-constant time. %
All the matchings are stored in a hash structure to allow faster search for duplicates.
Finally, we can return not only the maximal cycle-free matching size, but an actual maximal cycle-free matching by storing, along with each representative matching, a binary vector of size $|X_i\cup F_i|$ with all the matched nodes so far.
 
\paragraph*{Grouping}
Traversing the nice tree decomposition in a bot\-tom-up fashion, each node appears in a set of bags that form a subtree of the tree decomposition (\emph{coherence requirement}). This means that, whenever a node is forgotten, it is never introduced again in the bottom-up traversal.

A na\"ive version of the algorithm described above would build the complete list of valid alternating cycle-free matchings: the set $\Mmatchset{i}$ would contain all valid matchings in the graph induced by the nodes in $X_i \cup F_i$. In particular, for each matching $\Mmatch \in \Mmatchset{i}$ the algorithm would store the binary vector $\mathbf{v}(\Mmatch)$ and the \funname{union-find} structure $\mathbf{uf}(\Mmatch)$ on $X_i \cup F_i$. However, it is sufficient to store the essential information about each $\Mmatch$ by restricting the \funname{union-find} structure $\mathbf{uf}(\Mmatch)$ and the binary vector $\mathbf{v}(\Mmatch)$ {\em only} to the nodes in the bag $X_i$ (for any matched node $x \in X_i$, node $c(x)$ of the \funname{union-find} structure is then chosen inside $X_i$). More precisely, we define an equivalence relation $\sim_i$ on the matchings of $\Mmatchset{i}$ such that $\Mmatch\sim_i\Mmatch'$ if and only if $\mathbf{v}(\Mmatch)=\mathbf{v}(\Mmatch')$ and $\mathbf{uf}(\Mmatch)=\mathbf{uf}(\Mmatch')$  on the nodes of $X_i$. Since two equivalent matchings only differ on the forgotten nodes $F_i$, the validation of the matching and cycle-free conditions of any extension of $\Mmatch$ or $\Mmatch'$  (or any combination with a third equivalent matching $\Mmatch''$) will be equal from now on.

Since we are interested in the alternating cycle-free matching with the minimum number of unmatched nodes, for each equivalence class we will choose a matching $\Mmatchr$ with the minimum number $m(\Mmatchr)$ of unmatched forgotten nodes as class representative. This number $m(\Mmatchr)$ is stored together with $(\mathbf{v},\mathbf{uf})$
for each equivalence class of $\Mmatchsetr{i}=\Mmatchset{i}/\!\!\sim_i$. In addition, we can compute the alternating cycle-free matching of maximum size by storing the complete binary vector $\mathbf{v}$ along with $m(\Mmatchr)$ (since the matching is cycle-free, this is sufficient to recover the set of arcs defining the matching).



\paragraph*{Execution time complexity}
To measure the running time we need to bound the number of equivalence classes of $\Mmatchsetr{i}$. Let $w_i$ be the number of nodes in $X_i$. The number of equivalence classes of $\Mmatchsetr{i}$ is then bounded above by the number of possible pairs $(\mathbf{v},\mathbf{uf})$ on $w_i$ nodes. The \funname{union-find} stores for each node $x$, either a component node $c(x)\in X_i$ or a list of at most $w_i$ component nodes, leading to at worst $2^{w_i}$ different lists per node, giving $2^{w_i^2}$ possible combinations of lists. Also there are $2^{w_i}$ possible binary vectors $\mathbf{v}$ of length $w_i$, therefore there are at worst $2^{w_i^2}2^{w_i}$ elements in $\Mmatchsetr{i}$ (this enumeration includes invalid matchings and incoherences between $\mathbf{v}$ and $\mathbf{uf}$). 

The time complexity is dominated by the execution at the join bag where pairs of equivalences classes from $\Mmatchsetr{j}$ and $\Mmatchsetr{k}$ have to be combined. Therefore we must square the number of equivalence classes in each set: the complexity for a join bag is  $O( 4^{w^2+w} \cdot w^3 \cdot \log(w))$ (please refer to the full version of this paper for details). Since there are $O(n)$ bags in a nice tree decomposition, the total execution time is in $O( 4^{w^2+w} \cdot w^3 \cdot \log(w)\cdot n)$. Finally, as already stated in Section~\ref{sec:tw}, for bounded treewidth computing a tree decomposition and a nice tree decomposition is linear.
So the whole process from the bipartite graph to the resulting maximal
alternating cycle-free matching is fixed-parameter tractable in the treewidth. Note that neither the decomposition nor the algorithm use the fact that the graph is bipartite.

\begin{table*}
\centering
\begin{tabular}{|c|rr||c|c|rr||c|c|rr|}
\hline
$n$ & \multicolumn{2}{l||}{$\#$ triangulations} & $\min w$ & $\max w$ & \multicolumn{2}{l||}{$\bar{w}$} & $\min w$ & $\max w$ & \multicolumn{2}{l|}{$\bar{w}$ (dual)} \\
\hline
$1$&$     4$&$(3)$&$    1$&$    2$&$ 1.50$&$(1.67)$&			$ 0$&$ 0$&$ 0.00$&\\
$2$&$    17$&$(12)$&$    1$&$    3$&$ 2.47$&$(2.42)$&			$ 1$&$ 1$&$ 1.00$&\\
$3$&$    81$&$(63)$&$    1$&$    3$&$ 2.51$&$(2.49)$&			$ 1$&$ 2$&$ 1.60$&$(1.52)$\\
$4$&$   577$&$(433)$&$    1$&$    5(4)$&$ 2.77$&$(2.73)$&		$ 1$&$ 3$&$ 1.91$&$(1.87)$\\
$5$&$  5184$&$(3961)$&$    1$&$    6(5)$&$ 2.95$&$ $&			$ 1$&$ 4$&$ 2.16$&$(2.18)$\\
$6$&$ 57753$&$(43584)$&$    1$&$    6$&$ 3.16$&$(3.19)$&		$ 1$&$ 4$&$ 2.31$&$(2.35)$\\
$7$&$ 722765$&$(538409)$&$    1$&$    7$&$ 3.35$&$(3.40)$&		$ 1$&$ 4$&$ 2.45$&$(2.50)$\\
\hline
\end{tabular}
\caption{Treewidths of the spine (left) and of the dual graphs (right) of closed generalized triangulations up to $7$ tetrahedra. The values in brackets are for $1$-vertex triangulations.}
\label{tab:closed}
\end{table*}

\subsection{Correctness of the Algorithm}
We must check that the algorithm, without the grouping, considers every possible alternating cycle-free matching in $G$ and that the grouping occurring at the forget and join bags does not discard the maximal matching.

The \textit{node coverage} and  \textit{arc coverage} properties of nice tree decompositions (Definition~\ref{def:treewidth}) ensure that each node is processed and each arc is considered for inclusion in the matching at one introduce node. Since the introduce node discards only matchings that violate either the matching or the cycle condition, and these violations cannot be legalized by further extensions or combinations of the matchings, all possible valid matchings are considered.

Now, consider two matchings $\Mmatch$ and $\Mmatch'$ that are grouped together and represented by $\Mmatchr$ at a forget or join bag $X_i$. In the further course of the algorithm, the representative $\Mmatchr$ is then extended or combined with other matchings to form new valid matchings $\Mmatchr'$. The \textit{coherence} property of Definition \ref{def:treewidth} assures that no neighbor of a newly introduced node can be a forgotten node, so the extension or combination only modifies matchings $\Mmatch$ and $\Mmatchr$ on nodes of $X_i$, which are represented in the structure of $\Mmatchr'$. Hence, the valid matchings $\Mmatchr'$ actually represent all the valid extensions and combinations of $\Mmatch$ and $\Mmatchr$. The grouping thus generates all valid and relevant representatives of matchings in order to find a maximal alternating cycle-free matching. Moreover, in case $\Mmatch$ and $\Mmatch'$ are equivalent and both with the lowest number of forgotten unmatched nodes, choosing $\Mmatch$ or $\Mmatch'$ as representative leads to the exact same extensions and combinations.

Finally, let $\Mmatch_m$ be the alternating cycle-free matching of maximum size of $G$. In each bag the corresponding matching must be one of the matchings with the lowest number of unmatched nodes within its equivalence class $\Mmatchr_m \in \Mmatchsetr{i}$. Otherwise, a matching in the same class $\Mmatchr_m$, extended and combined as $\Mmatch_m$ in the sequel of the algorithm would give rise to a matching with fewer unmatched nodes. Therefore, the choice of the representative at the forget and join bags never discards the future alternating cycle-free matching of maximum size.


\subsection{Treewidth of the dual graph}

Up to this point, we have been dealing primarily with simplicial complexes and their spines. We now turn our attention to simplicial triangulations of $3$-manifolds and a more natural parameter associated to them.

\begin{definition}[Dual graph] The \textit{dual graph} of a simplicial triangulation of a $3$-manifold $\mathcal{T}$, denoted $\Gamma(\mathcal{T})$, is the graph whose nodes represent tetrahedra of $\mathcal{T}$, and whose arcs represent pairs of tetrahedron faces that are joined together.
\end{definition}

We show that, if the treewidth of the dual graph is bounded,
so is the treewidth of the spine, as stated by the following theorem (please refer to the full version of this paper for the proof).

\begin{theorem} \label{thm:fpg}
	Let $G$ be the spine of a simplicial triangulation of a $3$-manifold $\mathcal{T}$. If $\mathbf{tw}(\Gamma(\mathcal{T})) \leq k$, then $\mathbf{tw}(G) \leq 10k+9$.
\end{theorem}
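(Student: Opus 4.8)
The plan is to construct a tree decomposition of the spine $G$ directly from a given tree decomposition $(\{X_i\},T)$ of the dual graph $\Gamma(\mathcal{T})$ of width at most $k$, by expanding each tetrahedron into the lower-dimensional simplexes it contains. Concretely, for every bag $X_i$ (a set of tetrahedra) I set
\[
  Y_i \;=\; \{\, \sigma : \dim\sigma \in \{1,2\} \text{ and } \sigma \subset t \text{ for some } t \in X_i \,\},
\]
that is, I replace each tetrahedron in $X_i$ by its $6$ edges and $4$ triangles. A single tetrahedron contains exactly $10$ simplexes of dimension $1$ or $2$, so $|Y_i| \le 10\,|X_i| \le 10(k+1)$, whence the width of $(\{Y_i\},T)$ is at most $10(k+1)-1 = 10k+9$. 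It then remains to verify that $(\{Y_i\},T)$ is a genuine tree decomposition of $G$.

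Node and arc coverage are immediate from purity. Every edge and every triangle of $\mathcal{T}$ is a face of some tetrahedron $t$; any bag $X_i$ with $t \in X_i$ then contains that simplex in $Y_i$, giving node coverage. For an arc of $G$, i.e.\ an incidence $e \subset f$ of an edge $e$ in a triangle $f$, choose a tetrahedron $t$ with $f \subset t$; then $e \subset f \subset t$, so both $e$ and $f$ lie in $Y_i$ for every bag containing $t$, giving arc coverage.

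The substantive part is coherence. For a simplex $\sigma$ with $\dim\sigma \in \{1,2\}$, let $T_\sigma$ be the set of tetrahedra of $\mathcal{T}$ having $\sigma$ as a face; then the bags of the new decomposition containing $\sigma$ are precisely those $Y_i$ with $X_i \cap T_\sigma \neq \emptyset$, and I must show these form a connected subtree of $T$. The key geometric observation, which is exactly where the closed-manifold hypothesis is used, is that $T_\sigma$ induces a \emph{connected} subgraph of $\Gamma(\mathcal{T})$: a triangle is a face of exactly two tetrahedra, which share that triangle and are thus adjacent in $\Gamma$; and the tetrahedra around an edge $\sigma$ form a cyclic chain, since the link of $\sigma$ is a circle, with consecutive members sharing a triangle through $\sigma$ and hence adjacent in $\Gamma$. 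Given this, coherence follows from the standard fact that in any tree decomposition the set of tree nodes whose bags meet a connected vertex set is itself a subtree: each $T_v$ with $v \in T_\sigma$ spans a subtree of $T$ by coherence of the original decomposition, adjacent tetrahedra share a bag, and gluing these subtrees along the connectivity of $T_\sigma$ yields a single connected subtree.

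I expect this coherence step to be the main obstacle, and within it the structural claim that $T_\sigma$ is connected in the dual graph; once that fact is established, the size bound and the two coverage conditions are routine. It is worth noting that the argument nowhere uses any special feature of the particular simplicial complex beyond the manifold structure of the links of edges and triangles, so the $10k+9$ bound is uniform over all simplicial triangulations of closed $3$-manifolds.
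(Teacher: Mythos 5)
Your proposal is correct and follows essentially the same route as the paper: the same bag-expansion construction (each tetrahedron replaced by its $10$ faces of dimension $1$ and $2$, giving width $10(k+1)-1$), the same coverage arguments, and the same coherence argument via connectivity in the dual graph of the set of tetrahedra around a triangle (two adjacent tetrahedra) or around an edge (a cyclic chain with consecutive members adjacent). The only cosmetic difference is that the paper also treats the case of a boundary triangle contained in a single tetrahedron, which is vacuous in the closed setting you correctly invoke.
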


\begin{table*}
\centering
\begin{tabular}{|c|r||rc|rc|rr||rc|rc|rr|}
\hline
$n$ & $\#$ triangulations & \multicolumn{2}{l|}{$\min w$} & \multicolumn{2}{l|}{$\max w$} & \multicolumn{2}{l||}{$\bar{w}$} & \multicolumn{2}{l|}{$\min w$} & \multicolumn{2}{l|}{$\max w$} & \multicolumn{2}{l|}{$\bar{w}$ (dual)} \\
\hline
$5$&$     1$&&$    6$&&$    6$&&$ 6.00$&&				$ 4$&&$ 4$&&$ 4.00$\\
$6$&$     2$&$    \leq$&$ 7$&$    \leq$&$ 8$&$ \leq$&$ 7.50$&&				$ 4$&&$ 5$&&$ 4.50$\\
$7$&$     5$&$    \leq$&$ 8$&$    \leq$&$ 11$&$ \leq$&$ 9.40$&&				$ 4$&&$ 6$&&$ 5.00$\\
$8$&$     39$&$    \leq$&$ 8$&$    \leq$&$ 14$&$ \leq$&$ 11.23$&&				$ 4$&&$ 7$&&$ 5.74$\\
$9$&$     1297$&$    \leq$&$ 8$&$    \leq$&$ 18$&$ \leq$&$ 13.55$&&			$ 4$&&$ 9$&&$ 7.01$\\
$10$&$     249015$&$    \leq$&$ 8$&$    \leq$&$ 22$&$ \leq$&$ 16.33$&			$ \leq$&$ 4$&$ \leq$&$ 13$&$ \leq$&$ 8.99$\\
\hline
\end{tabular}
\caption{Upper bounds and exact values for the treewidths of the spine (left) and
of the dual graph (right) of simplicial triangulations of $3$-manifolds up to $10$ vertices.}
\label{tab:comb}
\end{table*}

\section{Experimental Results}\label{sec:experimental}

In Section \ref{sec:WpCompleteness} we have seen that the problem of finding optimal Morse matchings is hard to solve in general. In Section \ref{sec:treewidth} on the other hand we proved that in the case of a small treewidth of the spine of a $2$-dimensional complex or, equivalently, in the case of a bounded treewidth of the dual graph of a simplicial triangulation of a $3$-manifold, finding an optimal Morse matching becomes easier. Up to a certain scaling factor, the results stated in Section \ref{sec:treewidth} hold for generalized triangulations as well (also, note that the notion of a spine or the dual graph can be extended in a straightforward way to generalized triangulations).

Given this situation, a natural question to ask is the following: What is a {\em typical} value for the treewidth of the respective graphs of (i) small generic generalized triangulations of $3$-manifolds, and (ii) small generic simplicial triangulations of $3$-manifolds?

In a series of computer experiments we computed the treewidth of the relevant graphs (i.e., the spine and the dual graph) of all closed generalized triangulations of $3$-manifolds up to $7$ tetrahedra \cite{burton11-genus}, and all simplicial triangulations of $3$-manifolds up to $10$ vertices \cite{Lutz08ThreeMfldsWith10Vertices}.
The computer experiments were done using \funname{LibTW}
\cite{Dijk06TreewidthDotCom} to compute the treewidth / upper bounds for
the treewidth, with the help of the \funname{GAP} package \textsf{simpcomp} \cite{simpcompISSAC,simpcomp} and the $3$-manifold software \funname{Regina} \cite{burton04-regina,regina}. We report the minimal, maximal and average treewidths of all triangulations with the same number of vertices in Tables \ref{tab:closed} and \ref{tab:comb}.

\medskip
Regarding the treewidth of generalized triangulations of $3$-manifolds, we observe that there is a large difference between the average treewidth and the maximal treewidth for both the dual graph and the spine. In particular, the average treewidth appears to be relatively small. Moreover, there is only a slight difference between the data for general closed triangulations and $1$-vertex triangulations. This fact is somehow in accordance with our intuition since the number of $0$-dimensional simplexes should neither directly affect the spine nor the dual graph of a generalized triangulation.

On the other hand, the gap between the maximum treewidth and the average treewidth in the case of simplicial triangulations of $3$-manifolds is relatively small compared to the data for generalized triangulations. In addition, the treewidth of the spines of some particularly interesting $2$-dimensional simplicial complexes (reported in the full version of this paper) is significantly smaller than the (upper bound of the) treewidth of simplicial triangulations of $3$-manifolds. At this point it is important to note that, while the data concerning the spines for simplicial complexes only consists of upper bounds, experiments applying the algorithm for the upper bound to smaller graphs and then computing their real treewidths suggest that these upper bounds (in average) are reasonably close to the exact treewidth. 

Further analysis shows that the average treewidth of the spines for both generalized and simplicial triangulations of $3$-manifolds is mostly less than twice the treewidth of the dual graph, and hence much below the theoretical upper bound given by Theorem \ref{thm:fpg}. Also, the ratio between these two numbers appears to be more or less stable for all values shown in Tables \ref{tab:closed} and \ref{tab:comb}. This can be seen as experimental evidence that for triangulated $3$-manifolds the treewidth of the dual graph is responsible for the inherent difficulty to solve \textsc{Erasability} and related problems.

\medskip
Despite the small values of $n$ in our tables,
there are theoretical reasons to believe that the patterns of small treewidth will continue for larger $n$.
For instance, the conjectured minimal triangulations of Seifert fibered spaces over the sphere have dual graphs with $O(1)$ treewidth for arbitrary $n$.
Moreover, following recent results of Gabai~\textit{et al.} \cite{Gabai09MinVolCuspedHyperb3Mflds} there are reasons to believe that large infinite classes of topological $3$-manifolds
admit triangulations whose treewidths are below provable upper bounds.  Investigating these upper bounds is work in progress.




\section{Acknowledgments}
This work is partially financed by CNPq, FAPERJ, PUC-Rio, CAPES, and Australian Research Council Discovery Projects DP1094516 and DP110101104.
We would also like to thank Michael Joswig for fruitful discussions.



\appendix
%
\newpage
\noindent
{\Huge \bf Appendix}

\section{Proof of Lemma~\ref{lem:morse_manifold}}\label{app:morse_manifold} 

The proof of Lemma~\ref{lem:morse_manifold} actually follows directly from Joswig and Pfetsch's proof of the following lemma.

\begin{lemma}[\cite{joswig2006computing}] 
	Let $M$ be a Morse matching on a simplicial complex $\Delta$. Then we can compute a Morse matching $M'$ in polynomial time which has exactly one critical $0$-simplex, the same number of critical simplexes of dimension greater or equal $2$ as $M$, and $c(M') \leq c(M)$.
\end{lemma}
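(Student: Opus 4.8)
The plan is to rebuild only the dimension-$0$/dimension-$1$ part of $M$, leaving everything touching dimension $\geq 2$ exactly as it is. I assume $\Delta$ is connected (otherwise a single critical $0$-simplex is impossible; in the intended application $\Delta$ is a connected manifold). Writing $V$ for the set of $0$-simplexes, I would first classify the $1$-simplexes according to $M$: let $E_M$ be the edges matched with a $0$-simplex, $E_T$ the edges matched with a $2$-simplex, and $E_C$ the critical (unmatched) edges. Orienting each edge of $E_M$ toward the vertex it is matched with gives in-degree at most one at every vertex, and acyclicity of $M$ forbids a directed cycle; hence $(V,E_M)$ is a forest, each component of which contains exactly one unmatched vertex. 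Thus the critical $0$-simplexes correspond to the components of $(V,E_M)$, and $c_0(M)=|V|-|E_M|$.

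Granting that $(V,\,E_M\cup E_C)$ is connected (the crux, discussed below), the construction of $M'$ is short: choose a spanning tree $T\subseteq E_M\cup E_C$, root it at an arbitrary vertex $r$, let $M'$ match every tree edge with its child endpoint, declare every edge of $E_M\cup E_C$ outside $T$ critical, and keep verbatim every matched pair of $M$ that involves a simplex of dimension $\geq 2$ (in particular every pair of $E_T$). Then only $r$ is an unmatched vertex, so $c_0(M')=1$; the matched/unmatched status of each simplex of dimension $\geq 2$ is untouched, so $c_i(M')=c_i(M)$ for every $i\geq 2$. Finally $|T|=|V|-1\geq|E_M|$ while every higher-dimensional pair of $M$ survives, so $|M'|\geq|M|$; since the number of critical simplexes equals the total number of simplexes minus $2|M|$, this yields $c(M')\leq c(M)$. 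Building $T$ and rerouting are clearly polynomial.

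Two points must then be checked. For acyclicity of $M'$: in $H(M')$ a $0$-simplex $v$ has a single outgoing arc (to its parent tree edge), and a tree edge matched to its child $v$ has a single outgoing arc (down to the parent of $v$); hence any directed walk reaching dimension $0$ climbs monotonically up $T$ to $r$ and halts, so no directed cycle can meet a $0$-simplex. Away from dimension $0$, $H(M')$ coincides with $H(M)$, which is acyclic, so $H(M')$ has no directed cycle and $M'$ is a Morse matching.

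The main obstacle is the connectivity claim, and it is precisely where acyclicity of the input $M$ is used. I would argue by contradiction: if $E_T$ separated the (connected) $1$-skeleton into nonempty $A\sqcup B$, then every edge crossing the cut would lie in $E_T$. A crossing edge $e$ is matched with a $2$-simplex $t\ni e$, and since $t$ has vertices on both sides it contains exactly one further crossing edge $\pi(e)$; the step $e\to t\to\pi(e)$ is a valid $V$-path, and $\pi$ is a well-defined self-map of the finite nonempty set of crossing edges. Iterating $\pi$ must cycle, producing a closed $V$-path $e_1\to t_1\to e_2\to\cdots\to e_k\to t_k\to e_1$ whose $2k$ simplexes are distinct, i.e.\ a directed cycle in $H(M)$ — contradicting that $M$ is a Morse matching. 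Hence $(V,E_M\cup E_C)$ is connected, the construction applies, and the proof is complete.
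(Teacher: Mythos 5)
Your proof is correct and follows essentially the same route as the paper's (which reproduces Joswig--Pfetsch): your graph $(V,\,E_M\cup E_C)$ is exactly the paper's $\gamma(M)$, your cut-edge map $\pi$ producing a directed cycle in $H(M)$ is precisely the paper's argument that $\gamma(M)$ is connected, and the rooted-spanning-tree rematching with all higher-dimensional pairs kept verbatim is the same construction. The only difference is that you spell out the acyclicity check for $M'$ and the counting argument for $c(M')\leq c(M)$, which the paper dismisses as ``easy to see.''
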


The proof builds a Morse matching from a spanning tree of the primal graph, i.e. the graph obtained considering only the vertices and edges of $\Delta$. For a $3$-manifold $\Delta$, the proof of the previous lemma can be applied exactly the same way on the dual graph of $\Delta$ to obtain the following result.
\begin{lemma} 
Let $M$ be a Morse matching on a closed triangulated $3$-manifold $\Delta$. Then we can compute a Morse matching $M'$ in polynomial time which has exactly one critical $3$-simplex, the same number of critical simplexes of dimension less or equal $1$, and $c(M') \leq c(M)$.
\end{lemma}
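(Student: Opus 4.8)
The plan is to exploit the combinatorial self-duality of the top two layers of the Hasse diagram against its bottom two layers, so that the preceding construction of Joswig and Pfetsch can be run verbatim with the dual graph $\Gamma(\Delta)$ playing the role of the primal $1$-skeleton. The first step is to record the structural fact that makes this possible: since $\Delta$ is a \emph{closed} triangulated $3$-manifold, every $2$-simplex is a face of exactly two $3$-simplexes. Hence $\Gamma(\Delta)$ has one node per $3$-simplex and exactly one arc per $2$-simplex, each arc joining the two tetrahedra sharing the corresponding triangle. This is precisely the incidence pattern of the primal $1$-skeleton, whose nodes are the $0$-simplexes and whose arcs are the $1$-simplexes (each joining its two endpoints). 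Moreover $\Gamma(\Delta)$ is connected, as $\Delta$ is a connected manifold.

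Next I would set up the formal correspondence. Restricting $H$ to $H_2$ (the bipartite graph on $2$- and $3$-simplexes) and reading off $M$, a matched pair $(\sigma,\tau)$ with $\sigma$ a triangle and $\tau$ a tetrahedron is exactly the datum of a discrete vector field on $\Gamma(\Delta)$: it selects, for the tetrahedron $\tau$, the incident arc $\sigma$ along which $\tau$ is matched. The unmatched $3$-simplexes are then the roots of this vector field, i.e.\ the critical $3$-simplexes, and the Morse condition restricted to $H_2$ says precisely that the vector field is a rooted forest, one root per tree. Under the incidence-preserving identification $0$-simplex $\leftrightarrow$ $3$-simplex and $1$-simplex $\leftrightarrow$ $2$-simplex, this is literally the situation treated by Joswig and Pfetsch for the bottom two layers, with $\Gamma(\Delta)$ in the role of the primal graph.

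I would then invoke their construction on $\Gamma(\Delta)$: choose a spanning tree, root it at an arbitrary tetrahedron, and match every non-root tetrahedron to the triangle separating it from its parent. This rematches only the $2$/$3$ portion of $M$ (and, as in the original proof, possibly the $1$/$2$ portion, since a triangle newly matched upward to a tetrahedron can no longer be matched down to an edge), leaves the $0$/$1$ portion untouched, and produces exactly one critical $3$-simplex. The number of critical simplexes of dimension $\leq 1$ is therefore unchanged, and the total critical count does not increase, for exactly the reason it does not in the primal argument.

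The one point that must be checked, and which I expect to be the main obstacle, is that the new matching $M'$ is still acyclic. Here I would use the standard fact that every directed cycle in $H(M')$ is confined to a single pair of consecutive dimensions (a descent-and-ascent argument shows all such cycles oscillate between two levels $H_i$). Consequently acyclicity of $M'$ reduces to acyclicity within each layer separately: the $0$/$1$ layer is unchanged, the $2$/$3$ layer is acyclic because the rooted spanning tree is, and only the effect of the rematching on the $1$/$2$ layer remains to be controlled. This last verification is handled identically to, and is in fact the exact dual of, the corresponding step in the proof of Joswig and Pfetsch, which completes the argument.
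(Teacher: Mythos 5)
Your overall strategy---dualize the Joswig--Pfetsch construction and rerun it with tetrahedra playing the role of vertices---is the right one, but the graph you run it on is wrong, and this is exactly where the content of the lemma lies. You take a spanning tree of the \emph{full} dual graph $\Gamma(\Delta)$, and you yourself concede that this ``possibly [rematches] the $1$/$2$ portion, since a triangle newly matched upward to a tetrahedron can no longer be matched down to an edge.'' But un-matching such a pair $(e,\sigma)$ leaves the edge $e$ unmatched, i.e.\ it creates a \emph{new critical $1$-simplex}; this contradicts your very next sentence (``the number of critical simplexes of dimension $\leq 1$ is therefore unchanged'') and the statement being proved. It can even increase the total count: if $M$ already has a single critical tetrahedron and your spanning tree uses $s\geq 1$ triangles that $M$ matches to edges, a direct count gives $c(M')=c(M)+2s$. (You also misremember the primal argument: Joswig and Pfetsch never disturb the adjacent layer; engineering the spanning tree so that no such conflict can occur is precisely the point of their proof.)

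The paper avoids this by taking the spanning tree inside the subgraph $\gamma^\ast(M)\subseteq\Gamma(\Delta)$ obtained by deleting every dual arc whose triangle is matched with an edge in $M$. Then the $0$/$1$ and $1$/$2$ layers are literally untouched, so the critical simplexes of dimension $\leq 1$ are exactly those of $M$, and a short count (each non-root tetrahedron absorbs one triangle, while $M$ had at least one critical tetrahedron) yields $c(M')\leq c(M)$. The idea missing from your proposal is the key lemma that $\gamma^\ast(M)$ is \emph{connected}: connectivity of $\Gamma(\Delta)$ itself is trivial but useless, whereas connectivity of $\gamma^\ast(M)$ is nontrivial and is exactly where the acyclicity of $M$ enters---if $\gamma^\ast(M)$ were disconnected, every triangle crossing the cut would be matched to an edge, and chasing cut triangles and their matched edges produces a directed cycle in $H(M)$, a contradiction. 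Your proposal invokes acyclicity of $M$ only to verify acyclicity of $M'$, never where it is actually needed. Your final layer-by-layer acyclicity check (via the fact that directed cycles stay within one layer $H_i$) is fine, and becomes even easier in the correct setup, since the $1$/$2$ layer is then unchanged rather than merely ``controlled.''
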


Since the proof works independently on the primal and dual graph, Lemma \ref{lem:morse_manifold} is a combination of these results. Here, we simply reproduce the proof of Joswig and Pfetsch~\cite{joswig2006computing} verbatim applying it to $3$-manifold complexes, using Poincar\'{e}'s duality. 

First consider a Morse matching $M$ for a connected $3$-manifold $\Delta$. Let $\gamma(M)$ be the graph obtained from the primal graph of $\Delta$ by removing all arcs (edges of $\Delta$) matched with triangles in $M$ and let $\gamma^\ast(M)$ be obtained from the dual graph of $\Delta$ by removing all the arcs (triangles in $\Delta$) where the corresponding triangles are matched with edges of $\Delta$ in $M$. Note that $\gamma(M)$ and $\gamma^\ast(M)$ contain all vertices and tetrahedra of $\Delta$, respectively. 

\begin{lemma} 
The graph $\gamma(M)$ and dual graph $\gamma^\ast(M)$ are connected. 
\end{lemma}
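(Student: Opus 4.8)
The plan is to prove both statements by the same contradiction argument, deriving a directed cycle in the acyclic graph $H(M)$ from any hypothetical disconnection. I would treat $\gamma(M)$ in detail; $\gamma^{\ast}(M)$ then follows verbatim after dualizing, which is precisely where Poincar\'e duality of the closed $3$-manifold enters.

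First I would assume $\gamma(M)$ is disconnected and fix a bipartition $(P,Q)$ of the vertex set of $\Delta$ into two nonempty parts, each a union of connected components of $\gamma(M)$, so that no edge of $\gamma(M)$ crosses between $P$ and $Q$. Since $\Delta$ is connected, its $1$-skeleton is connected, so at least one edge of $\Delta$ crosses the cut; every such crossing edge must be among the edges removed from the $1$-skeleton, i.e.\ matched in $M$ to a triangle. Writing $C$ for the set of crossing edges, we have $C\neq\emptyset$. The combinatorial heart is an \emph{even-crossing} observation: any triangle of $\Delta$ meets the vertex cut in either $0$ or $2$ crossing edges, since its three vertices split as $(3,0)$ or $(0,3)$ (no crossing edge) or $(2,1)$ or $(1,2)$ (exactly two crossing edges). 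Hence, if $e\in C$ is matched to a triangle $t(e)$, then $t(e)$ contains exactly one further crossing edge $\phi(e)\in C$, distinct from $e$. I would then build a directed walk in $H(M)$: the reversed matched arc gives $e\to t(e)$, and the unmatched boundary arc $t(e)\to\phi(e)$ is genuine because $\phi(e)\neq e$ is not the matched edge of $t(e)$. Starting from any $e\in C$, the walk $e\to t(e)\to\phi(e)\to t(\phi(e))\to\cdots$ never terminates, since $\phi(e)$ is again a crossing edge and hence again matched to a triangle. An infinite directed walk in the finite graph $H(M)$ forces a directed cycle, contradicting the hypothesis that $M$ is a Morse matching; therefore $\gamma(M)$ is connected.

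For $\gamma^{\ast}(M)$ I would run the dual version on the tetrahedron--triangle levels: a disconnection yields a bipartition $(P,Q)$ of the tetrahedra together with a crossing triangle, which is removed and hence matched to an edge. The role of the even-crossing lemma is now played by the fact that, in a closed $3$-manifold, the link of each edge $e$ is a circle, so the tetrahedra around $e$ form a cyclically ordered sequence glued along the triangles containing $e$; a $2$-colouring of this cycle by $P/Q$ has an even number of colour changes, giving an even (hence $\geq 2$) number of crossing triangles around $e$. This supplies, for a crossing triangle $t$ matched to $e(t)$, a second crossing triangle $t'$ sharing $e(t)$, and the arcs $t'\to e(t)\to t$ assemble into an arbitrarily long directed walk in $H(M)$, again contradicting acyclicity.

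I expect the main obstacle to be isolating and justifying the even-crossing lemma in the dual setting: the primal version is elementary and needs only a connected $1$-skeleton, whereas the dual version is exactly where one must invoke that $\Delta$ is a \emph{closed} $3$-manifold so that edge links are circles (equivalently, Poincar\'e duality), ensuring the tetrahedra around an edge close up into a cycle and forcing an even number of crossing triangles. Everything else---choosing the cut, translating removed cells into matched pairs, and converting an endless walk into a cycle by finiteness of the directed acyclic graph $H(M)$---is routine once this parity statement is in hand.
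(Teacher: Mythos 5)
Your proof is correct and follows essentially the same route as the paper: a disconnection yields cut cells that must all be matched, each matched partner contains a further cut cell, and the resulting endless alternating walk in the finite directed graph $H(M)$ forces a directed cycle, contradicting acyclicity of the Morse matching. The only difference is one of detail: where the paper disposes of $\gamma^\ast(M)$ by saying ``repeat the proof on the dual graph,'' you spell out that the dual version of the key step (a second crossing triangle around the matched edge) rests on edge links being circles in a closed $3$-manifold --- a worthwhile clarification, but not a different argument.
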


\begin{proof}
Suppose that $\gamma(M)$ is disconnected. Let $N$ be the set of nodes in a connected component of $\gamma(M)$, and let $C$ be the set of cut edges, that is, edges of $\Delta$ with one vertex in $N$ and one vertex in its complement. Since $\Delta$ is connected, $C$ is not empty. By definition of $\gamma(M)$, each edge in $C$ is matched to a unique $2$-simplex.

Consider the directed subgraph $D$ of the Hasse diagram consisting of the edges in $C$ and their matching $2$-simplexes. The standard direction of arcs in the Hasse diagram (from the higher to the lower dimensional simplexes) is reversed for each matching pair of $M$, i.e., $D$ is a subgraph of $H(M)$. We construct a directed path in $D$ as follows. Start with any node of $D$ corresponding to a cut edge $e_1$. Go to the node of $D$ determined by the unique $2$-simplex $\tau_1$ to which $e_1$ is matched to. Then $\tau_1$ contains at least one other cut edge $e_2$, otherwise $e_1$ cannot be a cut edge. Now iteratively go to $e_2$, then to its unique matching 2-simplex $\tau_2$, choose another cut edge $e_3$, and so on. We observe that we obtain a directed path $e_1,\tau_1,e_2,\tau_2, \cdots$ in $D$, i.e., the arcs are directed in the correct direction. Since we have a finite graph at some point the path must arrive at a node of $D$ which we have visited already. Hence, $D$ (and therefore also $H(M)$) contains a directed cycle, which is a contradiction since $M$ is a Morse matching.

To prove that $\gamma^\ast(M)$ is connected, we repeat the proof above on the dual graph.
\end{proof}


\setcounter{lemma}{0}
\begin{lemma}[\cite{lewiner2004apps, joswig2006computing, ayala2012perfect}] 
	Let $M$ be a Morse matching on a triangulated $3$-manifold $\Delta$. Then we can compute a Morse matching $M'$ in polynomial time which has exactly one critical $0$-simplex and one critical $3$-simplex, such that $c(M') \leq c(M)$.
\end{lemma}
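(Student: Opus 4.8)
\medskip

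The plan is to deduce the two-sided normalisation from two independent one-sided constructions and then to compose them. One construction rewires the matching only among the $0$- and $1$-simplexes so as to leave a single critical vertex; the Poincar\'e-dual construction rewires it only among the $2$- and $3$-simplexes so as to leave a single critical tetrahedron. The observation that makes the composition clean is that each construction alters matched pairs within its own pair of consecutive dimensions and leaves the edge--triangle matchings entirely fixed, so the two operations cannot interfere with one another.

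First I would handle the primal side. From the Morse matching $M$ I form $\gamma(M)$ by deleting from the $1$-skeleton of $\Delta$ every edge that $M$ matches to a triangle. The key technical step is to show $\gamma(M)$ is connected: assuming a proper component $N$, the (nonempty, since $\Delta$ is connected) set of cut edges must consist entirely of edges matched to triangles, and chasing the alternating sequence $e_1,\tau_1,e_2,\tau_2,\dots$ of cut edges and their matching triangles inside the finite graph $H(M)$ forces a repeated node, hence a directed cycle, contradicting the Morse property. Given connectivity, I take a rooted spanning tree of $\gamma(M)$ and match every non-root vertex to its parent edge. Because these tree edges lie in $\gamma(M)$ they are unmatched on the triangle side, so this is a legal matching together with the untouched higher-dimensional pairs; the parent orientation points monotonically toward the root and therefore introduces no new directed cycle, so the result is still a Morse matching. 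It has exactly one critical vertex, leaves every triangle-- and tetrahedron--matching (hence all critical simplexes of dimension $\ge 2$) unchanged, and since the rewiring matches all but one vertex the total number of critical simplexes cannot increase.

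By Poincar\'e duality the same argument on $\gamma^\ast(M)$ -- the dual graph with every triangle matched to an edge deleted -- yields, via a rooted spanning tree in the dual, a matching with a single critical tetrahedron while leaving all critical simplexes of dimension $\le 1$ fixed. I would then compose: apply the primal construction to $M$ to obtain $M_1$ with one critical vertex, and then the dual construction to $M_1$ to obtain $M'$. Since the dual step rewires only triangle--tetrahedron pairs it preserves the single critical vertex of $M_1$, and since the edge--triangle matchings were never touched we have $\gamma^\ast(M_1)=\gamma^\ast(M)$, so the dual connectivity hypothesis still holds for $M_1$. Each step amounts to building a spanning tree and re-matching, hence runs in polynomial time, and $c(M')\le c(M_1)\le c(M)$.

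The hard part is the connectivity lemma together with the acyclicity bookkeeping: one must verify both that $\gamma(M)$ and $\gamma^\ast(M)$ are connected (the cut-edge/directed-cycle argument) and that re-matching along a rooted spanning tree never creates a directed cycle in the reversed Hasse diagram when combined with the retained higher- (respectively lower-) dimensional arcs. The dimensional independence of the two constructions -- the primal rewiring touches only dimensions $0,1$, the dual only dimensions $2,3$, and both fix the edge--triangle pairs -- is exactly what guarantees that composing them preserves both normalisations at once, and this is the point I would state most carefully.
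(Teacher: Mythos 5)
Your proposal is correct and takes essentially the same route as the paper: both rest on the connectivity of $\gamma(M)$ and $\gamma^\ast(M)$ (proved by the identical cut-edge/directed-cycle chase) followed by rematching along rooted spanning trees, the only cosmetic difference being that you compose the primal and dual rewirings sequentially while the paper replaces both parts of $M$ simultaneously, justified by the same dimensional-independence observation you spell out. The one fact you leave implicit --- that every Morse matching has at least one critical vertex and one critical tetrahedron, which is what makes the spanning-tree matching maximal and hence guarantees $c(M')\leq c(M)$ --- is invoked explicitly in the paper via Theorem~\ref{forman_theorem}, but it is the same standard fact underlying both write-ups.
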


\begin{proof}
Since $\gamma(M)$ and $\gamma^\ast(M)$ are connected, they both have spanning trees, and we will use them to build the Morse matching. First pick an arbitrary node $r_1$ and any spanning tree of $\gamma(M)$ and direct all arcs away from $r_1$. Then pick an arbitrary tetrahedron (a node in the dual graph) $r_2$ and any spanning tree of $\gamma^\ast (M)$ and direct all triangles (arcs in dual graph) away from $r_2$. This yields a maximum Morse matching on $\gamma(M)$ and $\gamma^\ast (M)$. It is easy to see that replacing the part of $M$ on $\gamma(M)$ and $\gamma^\ast (M)$ with this matching yields a Morse matching. This Morse matching has only one critical vertex (the root $r_1$) and one critical tetrahedron (the root $r_2$). Note that every Morse matching in a triangulated $3$-manifold contains at least one critical vertex and at least one critical tetrahedron; this can be seen from Theorem \ref{forman_theorem}. Furthermore, the total number of critical simplexes can only decrease, since we computed an optimal Morse matching on $\gamma(M)$ and $\gamma^\ast (M)$.
\end{proof}


\newpage
\section{Proof of Theorem~\ref{thm:inWP}}\label{app:inWP}

\setcounter{theorem}{2}
\begin{theorem}
	\textsc{Erasability} $\leq_{FPT}$ \textsc{Minimum Axiom Set}, therefore \textsc{Erasability} is in $W[P]$.
\end{theorem}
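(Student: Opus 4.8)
\textbf{The plan} is to exhibit an explicit parameterized reduction that sends an instance of \textsc{Erasability} to an instance of \textsc{Minimum Axiom Set} while preserving the natural parameter exactly, i.e.\ with $g(k)=k$. Since \textsc{Minimum Axiom Set} is $W[P]$-complete and the class $W[P]$ is closed under $\leq_{FPT}$-reductions, producing such a reduction immediately places \textsc{Erasability} in $W[P]$. The guiding intuition is that the monotone derivation process defining the sets $S_i$ in \textsc{Minimum Axiom Set} is tailor-made to simulate the iterative removal of external triangles: a triangle should become ``derivable'' precisely when it becomes external, and the axiom set should play the role of the critical triangles $\Delta_0$ removed by fiat.

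\textbf{The construction.} Given a $2$-dimensional simplicial complex $\Delta$ with integer $k$, I would let the sentence set $S$ be the set of triangles of $\Delta$. For an edge $e$, write $T(e)$ for the set of triangles containing $e$. For every triangle $t$ and every edge $e$ of $t$, I add the implication $(T(e)\setminus\{t\},\, t)$ to the relation $R$, and I set the output parameter equal to $k$. The reduction is clearly computable in time polynomial in $|\Delta|$, since $|R|$ is bounded by three times the number of triangles, so in particular it is an $\leq_{FPT}$-reduction. The design makes the axiom set $S_0$ correspond to the critical triangles $\Delta_0$: a triangle $t$ is derived as soon as, for one of its edges $e$, every other triangle on $e$ is already gone, which is exactly the condition that $t$ is external once those triangles have been erased. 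Note that a triangle with a free edge in $\Delta$ itself yields an implication with empty premise $U=\emptyset$, so it is derivable at the first step; this matches the fact that it is external from the outset.

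\textbf{Correctness.} The equivalence $\operatorname{er}(\Delta)\le k \iff$ there is an axiom set of size $\le k$ deriving all of $S$ would be proved in both directions by transporting one process into the other. For the forward direction, given $\Delta_0$ with $|\Delta_0|\le k$ and an erasing sequence $\Delta\setminus\Delta_0\rightsquigarrow\emptyset$, I set $S_0=\Delta_0$ and argue by induction along the erasing order that each erased triangle enters some $S_i$: when $t$ is erased it is external via an edge $e$ whose other triangles $T(e)\setminus\{t\}$ are already removed, hence already in the axiom set or derived earlier, so the implication $(T(e)\setminus\{t\},t)\in R$ fires. For the converse, given an axiom set $S_0$ with $S_n=S$, I put $\Delta_0=S_0$ and erase the remaining triangles in the order of their first appearance in the derivation: when $t$ first enters $S_i$ via $(U,t)\in R$ with $U\subseteq S_{i-1}$, the set $U=T(e)\setminus\{t\}$ is entirely gone, so edge $e$ is free and $t$ is external at that moment.

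\textbf{Expected obstacle.} The delicate point is the bookkeeping between the two dynamics: the closure operation may add many sentences in a single step $S_{i-1}\to S_i$, whereas erasing removes one triangle at a time, so I must justify that a valid derivation order can always be refined to a valid one-at-a-time erasing order and conversely. This amounts to checking that externality is a monotone, ``local'' condition depending only on which triangles have already been removed, so that interleaving the order of simultaneously-derivable triangles never invalidates the external-edge hypothesis. Once this monotonicity is made precise, the inductive arguments go through routinely and the parameter-preserving reduction yields membership of \textsc{Erasability} in $W[P]$.
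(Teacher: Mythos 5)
Your proposal is correct and takes essentially the same approach as the paper's own proof: the identical reduction (triangles as sentences, implications $(T(e)\setminus\{t\},t)$ for each edge $e$ of each triangle $t$, parameter preserved as $k$) and the same two-direction correctness argument transporting erasing sequences into derivation sequences and back. The only cosmetic difference is that the paper preprocesses away triangles with free edges so that every premise set is nonempty, whereas you keep them and let empty-premise implications fire at the first step; both are valid under the paper's statement of \textsc{Minimum Axiom Set}, which permits $U=\emptyset$.
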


\begin{proof}
We show membership of \textsc{Erasability} in $W[P]$ by reducing a given instance $(\Delta,k)$ of \textsc{Erasability} to an instance $(S,R,k)$ of \textsc{Minimum Axiom Set}. 

W.~l.~o.~g. we can assume that the $2$-dimensional simplicial complex $\Delta$ has no external edges (if $\Delta$ has external edges we first remove these edges until no external edge exists and reduce the remaining problem instance to an instance of \textsc{Minimum Axiom Set}). We now identify the set of triangles of $\Delta$ with the set of sentences $S$ in a one-to-one correspondence. For every edge $e \in \Delta$ we denote the set of all triangles containing $e$ by $\operatorname{star}_{\Delta} (e) \subset \Delta$, we write for the corresponding set of sentences $S_e \subset S$, and we define the set of implication relations $R$ by the relations 
$$ (S_e \setminus \{s\}, s)$$
for each triangle $s \in S_e$ for all edges $e \in \Delta$. Note that $\Delta$ has no external edges and thus $S_e \setminus \{s\} \neq \emptyset$ for all $e$.

\medskip
In a next step, we show that for all axiom sets $S_0 \subset S$ of size $k$ we have $ \Delta \setminus \Delta_0 \rightsquigarrow \emptyset$ for the associated subset of triangles $\Delta_0 \subset \Delta$ of size $k$. To see that this is true, note that for the augmenting sequence $S_0 \subset S_1 \subset \ldots \subset S_n = S$ of $S$, their corresponding subsets of triangles $\Delta_0 \subset \Delta_1 \subset \ldots \subset \Delta_n = \Delta$ and $i \in \{ 1 , \ldots,  n \}$ fixed, all sentences $s \in S_i \setminus S_{i-1}$ have to occur in a relation $(S_e \setminus \{s\}, s)$ for some edge $e$ with $S_e \setminus \{ s \} \subset S_{i-1}$. For the triangle $t \in \Delta$ corresponding to $s$ this means that, $\operatorname{star}_{\Delta} (e) \setminus t \subset \Delta_{i-1}$. Thus, if we assume that all triangles in $\Delta_{i-1}$ are already erased, $t$ must be external and thus can be erased as well. The statement now follows by the fact that for $i=1$, all triangles in $\Delta_0$ are already erased in $\Delta \setminus \Delta_0$ and hence $ \Delta \setminus \Delta_0 \rightsquigarrow \emptyset$.

\medskip
Conversely, let $\Delta_0 \subset \Delta$ be of size $k$ such that $ \Delta \setminus \Delta_0 \rightsquigarrow \emptyset$. Since $\Delta$ has no external triangles but $ \Delta \setminus \Delta_0 \rightsquigarrow \emptyset$, there must be external triangles $ t \in \Delta \setminus \Delta_0 $ and hence for $s\in S$ being the sentence corresponding to the triangle $t$ there is a relation $(S_e \setminus \{s\}, s)$ with $S_e \setminus \{s\} \subset S_0$, where $S_0$ is the set of sentences corresponding to the set of triangles $\Delta_0$. We then define $S_1$ to be the union of $S_0$ with all sentences $s$ of the type described above and iterating this step results in a sequence of subsets $S_0 \subset S_1 \subset \ldots \subset S_n = S$ for some $n$ what proves the result.
\end{proof}


\newpage
\section{Proof of Theorem~\ref{thm:WPhard}}\label{app:WPhard}

\setcounter{theorem}{3}
\begin{theorem}
	\textsc{Minimum Axiom Set} $\leq_{FPT}$ \textsc{Erasability}, even when the instance of \textsc{Erasability} is a strongly connected pure 2-dimensional simplicial complex $\Delta$ which is embeddable in $\mathbb{R}^3$.  Therefore \textsc{Erasability} is $W[P]$-hard.
\end{theorem}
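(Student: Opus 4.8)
The plan is to exhibit an explicit parameter-preserving reduction from \textsc{Minimum Axiom Set} to \textsc{Erasability}, using the gadgets of Definition~\ref{def:implication_gadget}. Given an instance $(S,R,k)$, I would assemble the complex $\Delta$ (as in \figref{gadget_construction}) by placing one sentence gadget for each $s\in S$ and one implication gadget for each relation $(U,s)\in R$, with the sentence gadgets shared among all relations in which a given sentence appears. The puncture count $2n+m$ in the $s$-gadget is chosen precisely so that every relation concluding $s$ has its own conclusion-puncture and every relation using $s$ as a hypothesis has its two hypothesis-punctures, so all tubes attach with no leftover free boundary. This makes $\Delta$ pure, strongly connected (each gadget is linked to the rest through tubes), and entirely free of external edges and triangles. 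The construction has size polynomial in $|S|+|R|$, is computable in polynomial time, and I set $g(k)=k$, so it is an FPT reduction in the sense of Definition~\ref{def:parameterized_reduction}; establishing the yes/no equivalence then gives $W[P]$-hardness.

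For the forward direction, suppose $(S,R,k)$ is a yes-instance witnessed by an axiom set $S_0$ with $|S_0|\le k$ and a derivation $S_0\subset S_1\subset\cdots\subset S_n=S$. I would let $\Delta_0$ consist of a single triangle taken from the sentence gadget of each $s\in S_0$, so $|\Delta_0|=|S_0|\le k$. Puncturing a sentence gadget creates external triangles and lets its underlying punctured sphere be peeled away entirely; this erases every axiom gadget. I then erase in derivation order: once all hypothesis gadgets of a relation $(U,s)$ with $U\subseteq S_{i-1}$ are gone, Lemma~\ref{lem:implication_gadget} erases the tubes of its implication gadget, which frees the corresponding conclusion-puncture of the $s$-gadget and allows that gadget to be erased in turn. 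Iterating along $S_0\subset\cdots\subset S_n=S$ removes every sentence gadget and every tube, so $\Delta\setminus\Delta_0\rightsquigarrow\emptyset$ and $\operatorname{er}(\Delta)\le k$.

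For the converse, suppose $\Delta\setminus\Delta_0\rightsquigarrow\emptyset$ with $|\Delta_0|\le k$. I would let $S_0$ be the set of sentences whose gadgets contain a triangle of $\Delta_0$; since distinct gadgets occupy disjoint triangle sets, $|S_0|\le|\Delta_0|\le k$. To certify a derivation, I record the order in which the sentence gadgets become fully erased: because $\Delta$ has no external triangles, a gadget containing no critical triangle can only begin to be erased once one of its conclusion-punctures is freed, which by Lemma~\ref{lem:implication_gadget} requires all hypotheses of the relevant relation to have been erased beforehand. Reading the gadgets off in their erasure order thus yields a chain $S_0\subset S_1\subset\cdots\subset S_n=S$ respecting $R$, so $(S,R,k)$ is a yes-instance.

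The main obstacle is making the converse robust against erasure strategies that spend critical triangles inside the tubes rather than inside sentence gadgets: a priori, cutting a tube might free a conclusion-puncture without first erasing the hypotheses, short-circuiting the intended implication structure and pushing $\operatorname{er}(\Delta)$ below the true minimum axiom set size. This is exactly what the doubling and nesting of the $2|U|$ tubes in Definition~\ref{def:implication_gadget} are designed to prevent, as already reflected in the proof of Lemma~\ref{lem:implication_gadget}: since all $2|U|$ tube ends are identified at the single conclusion-puncture and each surviving hypothesis gadget sits inside a closed two-tube loop, that puncture can be freed only by removing every tube, and doing so while some hypothesis remains is never cheaper in critical triangles than simply treating the conclusion as an axiom. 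I would therefore argue that an optimal $\Delta_0$ may be assumed to place critical triangles only inside sentence gadgets without increasing $|\Delta_0|$, which reduces the converse to the clean correspondence above and yields the exact equality $\operatorname{er}(\Delta)=$ minimum axiom set size. Finally, I would verify embeddability into $\mathbb{R}^3$ by realizing each sentence gadget as an embedded punctured sphere and routing the nested tubes as pairwise disjoint embedded handles meeting in a small triangulated neighbourhood at each conclusion-puncture, as depicted in \figref{gadget_construction}.
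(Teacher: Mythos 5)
Your construction, your forward direction, and your overall plan for the converse (move all critical triangles into sentence gadgets, then read a derivation off the erasure order) are exactly the paper's. The gap is in how you justify that relocation step. You treat a critical triangle placed inside a tube of a $(U,s)$-gadget as an attempt to free the \emph{conclusion} puncture, and you propose to exchange such triangles for ``treating the conclusion as an axiom.'' That exchange is wrong. Puncturing one tube lets you erase that tube, and once it is gone the attachment circle on the \emph{hypothesis} gadget $u$ at its end is free (only two sheets met there), so the entire $u$-gadget can be erased, then its partner tube, then everything that cascades from the disappearance of $u$. In other words, a single critical triangle in a tube is exactly as powerful as making the hypothesis $u$ an axiom --- not the conclusion $s$. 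Concretely, take $S=\{u,s,w\}$ and $R=\{(\{u\},s),(\{u\},w)\}$: one critical triangle in a tube of the $(\{u\},s)$-gadget attached to $u$ erases the whole complex (erase that tube, then the $u$-gadget, then the partner tube, which frees the conclusion puncture of $s$ since $|U|=1$; the tubes of the $(\{u\},w)$-gadget are freed by the erasure of $u$, and then $w$ follows), matching the minimum axiom set $\{u\}$ of size $1$. Your exchange would replace that triangle by one in the $s$-gadget, after which the $u$- and $w$-gadgets together with their tubes form a complex with no external triangles and survive forever. So your substitution does not preserve erasability, and your dominance claim (``never cheaper than treating the conclusion as an axiom'') is false.

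The fix is the exchange the paper actually uses: a critical triangle in a tube may be moved to the sentence gadget at that tube's end, i.e.\ to the \emph{hypothesis} gadget it is attached to, since puncturing the tube and puncturing that gadget trigger identical erasure cascades; this is a one-for-one, erasability-preserving replacement. After it, your definition of $S_0$ and your derivation-order argument (which is correct, and is also how the paper concludes --- a gadget with no critical triangle can only start to be erased through a freed conclusion puncture, which by Lemma~\ref{lem:implication_gadget} presupposes that all hypotheses of that relation were erased first) go through unchanged. Note that Lemma~\ref{lem:implication_gadget} and the two-tube nesting do guarantee that the conclusion puncture cannot be freed cheaply while a hypothesis survives, but that fact alone cannot substitute for the correct exchange, because it says nothing about what a tube puncture \emph{does} accomplish, namely the erasure of a hypothesis gadget.
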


\begin{proof}
To show $W[P]$-hardness of \textsc{Erasability}, we will reduce an arbitrary instance $(S,R,k)$ from \textsc{Minimum Axiom Set} to an instance $(\Delta,k)$ of \textsc{Erasability}. In order to do so, we will use a sentence gadget for each element of $S$ and an implication gadget for each relation $R$ (cf.\ Definition \ref{def:implication_gadget}) to construct a $2$-dimensional simplicial complex $\Delta$ with a polynomial number of triangles in the input size.

By construction, we can glue all sentence and implication gadgets together in order to obtain a simplicial complex $\Delta$ without any exterior triangles. Note that the only place where $\Delta$ is not a surface is at the former $m$ boundary components per sentence gadget corresponding to the $m$ relations in $R$ with the respective right hand side.

For any axiom set $S_0 \subset S$ of size $k$, let $\Delta_0$ be a set of $k$ triangles, one from each sentence gadget corresponding to a sentence in $S_0$. It follows by Lemma \ref{lem:implication_gadget}, that $\Delta \setminus \Delta_0$ can be erased to a complex where all the sentence gadgets $s$ corresponding to relations $(U,s)$, $U \subset S_0$, have external triangles. Since $S_0$ is an axiom set, iterating this process erases the whole complex $\Delta$.

\medskip
Conversely, let $\Delta_0$ be a set of $k$ triangles such that $\Delta \setminus \Delta_0 \rightsquigarrow \emptyset$. First, note that erasing a triangle of any tube of an implication gadget always allows us to remove the sentence gadget at the right end of this tube. Hence, w.~l.~o.~g.\ we can assume that all $k$ triangles in $\Delta_0$ are triangles of some sentence gadget in $\Delta$. Now, if any sentence gadget contains more than one triangle of $\Delta_0$ we delete all additional triangles obtaining a set $\Delta_0'$ of $k'$ triangles, $k' \leq k$, such that $\Delta \setminus \Delta_0' \rightsquigarrow \emptyset$ and thus the corresponding set of sentences is an axiom set of size $k' \leq k$.

\medskip
The result now follows by the observation that $\Delta$ can be realized by at most a quadratic number of triangles in the input size of $(S,R,k)$.
\end{proof}


\newpage
\section{Fixed parameter tractability of \textsc{Alternating cycle-free matching} from Courcelle's theorem}\label{app:courcelle}

Courcelle's celebrated theorem \cite{courcelle1990monadic} states that all graph properties that can be defined in {\em Monadic Second-Order Logic} (MSOL) can be decided in linear time when the graph has bounded treewidth. Here, we want to use Courcelle's theorem to show that problems in discrete Morse theory are fixed parameter tractable in the treewidth of some graph associated to the problem. However, it is not obvious how to {\em directly} state \textsc{Erasability} and \textsc{Morse Matching} in MSOL. Instead, we will apply Courcelle's theorem to \textsc{Alternating cycle-free matching} which by the comment made in Section~\ref{ssec:acfm} is a graph theoretical problem equivalent to \textsc{Erasability}.

\begin{theorem} 
	\label{thm:courcelle}
	Let $w \geq 1$. Given a bipartite graph with $\mathbf{tw}(G) \leq w$, \textsc{Alternating cycle-free matching} can be solved in linear time.
\end{theorem}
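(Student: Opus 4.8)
The plan is to express \textsc{Alternating cycle-free matching} as a property definable in Monadic Second-Order Logic (MSOL) over the incidence structure of the bipartite graph $G=(N_1\cup N_2,A)$, and then invoke Courcelle's theorem to obtain linear-time solvability for bounded treewidth. Since Courcelle's theorem decides properties rather than optimizes, I would first phrase the relevant decision problem: for a fixed target $k$, decide whether there exists an alternating cycle-free matching $M$ with at most $k$ unmatched nodes in $N_1$. Because $k$ is part of the parameter (or can be iterated through the at most $|N_1|$ possible values), it suffices to produce, for each candidate value, a single MSOL sentence whose size depends only on $k$ and $w$.

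First I would introduce a free (or existentially quantified) monadic set variable $M\subseteq A$ ranging over subsets of arcs, and write down a subformula $\operatorname{match}(M)$ asserting that $M$ is a matching: this is the first-order statement that no two distinct arcs of $M$ share an endpoint, which is easily expressed using the incidence relation between arcs and nodes. Next I would encode the cycle-free condition. The key observation is that an alternating cycle in a bipartite graph, with respect to a matching $M$, is a cycle whose edges alternate between $M$ and $A\setminus M$; the absence of such a cycle is precisely a forbidden-substructure condition. I would express ``$M$ has no alternating cycle'' by quantifying over a set $C\subseteq A$ of arcs and forbidding that $C$ forms a nonempty closed alternating walk: every node incident to an arc of $C$ is incident to exactly two arcs of $C$, one in $M$ and one in $A\setminus M$. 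Connectivity/closure of such a configuration is itself MSOL-definable, so the negation ``there is no such $C$'' is an MSOL sentence.

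To handle the counting of unmatched $N_1$-nodes by at most $k$, I would add an existential quantification over a set $D\subseteq N_1$ of size at most $k$ (expressible as a fixed first-order formula of length depending on $k$) together with the assertion that every node of $N_1$ outside $D$ is matched by $M$; this exactly says at most $k$ nodes of $N_1$ are unmatched. Combining $\operatorname{match}(M)$, the cycle-free clause, and the bounded-unmatched clause under an existential quantifier over $M$ yields a single MSOL sentence $\varphi_k$ whose length is bounded by a function of $k$. Courcelle's theorem then gives an algorithm deciding $G\models\varphi_k$ in time linear in $|G|$ (with the constant depending on $w$ and on $|\varphi_k|$), which proves the claim.

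The main obstacle I expect is the faithful MSOL encoding of the \emph{alternating} cycle condition, rather than plain cycle-freeness. Ordinary cycles and connectivity are standard in MSOL, but the alternation constraint couples the arc membership in $M$ with the local degree pattern, so care is needed to ensure the forbidden-configuration formula captures exactly closed alternating walks and nothing spurious (for instance, disjoint unions of smaller configurations or degenerate length-zero cases). Once this predicate is correctly stated as a first-order closure condition on the quantified arc set $C$, the remaining pieces are routine, and the linear-time conclusion follows directly from Courcelle's theorem. I would also remark, as the paper itself notes, that this route yields only a tower-of-exponentials dependence on $w$, which is why the explicit dynamic-programming algorithm of Theorem~\ref{thm:treewidth} is developed separately.
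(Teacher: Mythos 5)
Your overall route is the same as the paper's: express the property in Monadic Second-Order Logic and invoke Courcelle's theorem. Your encoding of the alternating-cycle-free condition differs from the paper's but is valid: you forbid a nonempty arc set $C$ in which every incident node meets exactly two arcs of $C$, one in $M$ and one in $A\setminus M$ (such a $C$ is a disjoint union of alternating cycles, so one exists iff an alternating cycle exists), whereas the paper instead uses the characterization of Golumbic, Hirst and Lewenstein that $M$ is alternating cycle-free iff every subset $M'\subseteq M$ induces a subgraph with a node of degree $1$, which is directly expressible by a universal set quantifier over $M'\subseteq M$. Either predicate works.

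The genuine gap is in your treatment of the budget $k$. In \textsc{Alternating cycle-free matching}, $k$ is part of the \emph{input} (it can be as large as $|N_1|$), and the theorem claims linear time whose hidden constant depends only on the treewidth bound $w$. Your encoding of ``at most $k$ unmatched nodes in $N_1$'' via an existential set $D$ with $|D|\le k$ requires a first-order cardinality clause whose length grows with $k$; plain MSOL cannot express such threshold counting with a formula of size independent of $k$. Consequently Courcelle's theorem yields a running time of the form $f(w,|\varphi_k|)\cdot n$, i.e.\ a ``constant'' that blows up (non-elementarily) with $k$ --- this proves fixed-parameter tractability in the combined parameter $w+k$, but not the stated theorem. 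Iterating over the $|N_1|$ candidate values of $k$, as you suggest, does not repair this, since the formula for the actual input value of $k$ is still long. The fix, and what the paper's ``$\max M$'' formulation implicitly invokes, is the linear optimization extension of Courcelle's theorem (EMSOL/LinEMSOL, due to Arnborg--Lagergren--Seese and Courcelle--Mosbah): keep only the constant-size MSOL predicate ``$M$ is an alternating cycle-free matching,'' compute in linear time the maximum $|M|$ over all sets satisfying it, and then compare $|N_1|-|M|$ (the number of unmatched $N_1$-nodes) with $k$. With that substitution your argument goes through.
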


\begin{proof}
	We will write a MSOL formulation of \textsc{Alternating cycle-free matching} based on the fact that $M$ is an alternating cycle-free matching if and only if $M$ is a matching and every induced $M$-subgraph contains a node of degree $1$ \cite{golumbic2001uniquely}:
\begin{multline*} 
	\max M:
	\forall x \in N [\neg\exists a_1,a_2 \in M (a_1 \neq a_2 \land inc(x,a_1) \land inc(x,a_2))] \\
	\phantom{Mm} \land \forall M' \subseteq M (\exists a \in M', \exists x \in N[inc(x,a) \land (\forall x_1 \in N(  \neg \exists a_1 \in M'( x \neq x' \land adj(x,x_1) \land inc(x_1,a_1))))])
\end{multline*}
where $inc(x,a)$ is the incidence predicate between node $x$ and arc $a$ and $adj(x,x')$ is the adjacency predicate between node $x$ and node $x'$. The above statement can be translated to plain English as follows: ``Find the largest matching $M$ of $G$, where each node is incident to at most one arc, such that in every subset $M'$ of the matching $M$ there exists a matched node $x$ in $M'$ such that its only neighbor matched in $M'$ is the other endpoint of the unique matched arc incident to $x$.
\end{proof}


\newpage
\section{Algorithm for \textsc{Alternating cycle-free matching}: step by step}\label{app:algo}
The algorithm visits the bags of the nice tree decomposition bottom-up from the leaves to the root evaluating the corresponding mappings in each step according to the following rules (\figref{algorithm_details_rot}).

\paragraph*{Leaf bag}
The set of matchings $\Mmatchsetr{i}$ of a leaf bag $X_i=\{x\}$ is trivial with a unique empty matching $\Mmatchr$ represented by $\mathbf{v}(\Mmatchr) = [0]$, and $\textbf{uf}(\Mmatchr)(x)$ defined by as an empty list, associated with $m(\Mmatchr)=0$.

\paragraph*{Introduce bag}
Let $X_i=X_j\cup\{x\}$ be an introduce bag with child bag $X_j$. The set of valid matchings $\Mmatchsetr{i}$ is built from $\Mmatchsetr{j}$ by introducing $x$ in each matching $\Mmatchr \in \Mmatchsetr{j}$, generating several possible matchings $\Mmatchr'$.
We can always introduce $x$ as an unmatched node, then $\Mmatchr$ is extended on $x$ by setting $\mathbf{v}(\Mmatchr')_{|x} = 0$ and updating $\textbf{uf}(\Mmatchr')$ with the ordered list of components for each matched neighbor of $x$.
In addition, for each unmatched neighbor $y \in X_j$, we can introduce $x$ as a matched node in the following way. We match both $x$ and $y$ in $\Mmatchr$ and set $\mathbf{v}(\Mmatchr')_{|x} = 1$ and $\mathbf{v}(\Mmatchr')_{|y} = 1$. If the intersection of the list of neighbor components of $x$ and $y$ is empty, then the matching of $x$ and $y$ does not create a cycle. In this case $\Mmatchr'$ is a valid extension of $\Mmatchr$. The update of the \funname{union-find} structure must then reflect the extensions of all alternating paths through arc $\{x,y\}$. We perform in $\textbf{uf}(\Mmatchr')$ a $\funname{union}$ operation for $x$ and all its matched neighbors (including $y$), and for $y$ and all its matched neighbors. We also add the merged component index $c(x)$ to the list of neighbor components of each unmatched neighbor of $x$ and $y$.
Then we include all valid extensions $\Mmatchr'$ to $\Mmatchsetr{i}$, reducing $\mathbf{v}(\Mmatchr')$ by calling \funname{find} for each node and neighbor component list entry, and we set $m(\Mmatchr)=m(\Mmatchr')$ for all extensions $\Mmatchr'$ of $\Mmatchr$.

\singleimage{Details of the decisions at an introduce bag.}{.9}{algorithm_details_introduce}
\paragraph*{Running time}
There are at most $2^{w_i^2+w_i}$ extended matchings $\Mmatchr'$ for bag $X_i$ (including all invalid ones), where $w_i=|X_i|=|X_j|+1$ (a new possible matching can be generated only once). Each new matching is validated by a direct lookup at  $\mathbf{v}(\Mmatchr')$ and ordered list comparison, leading to a linear time $w_i$. The update of each structure requires constant time for each matched neighbor of $x$ and almost linear time $O(w_i)$ plus the sorted insertion $O(w_i\cdot \log(w_i))$ for each unmatched neighbor, and there are at most $w_i$ neighbors in the bag. Thus, the total running time of an introduce bag is in $O(2^{w_i^2+w_i} \cdot w_i^2 \cdot \log(w_i) )$.

\paragraph*{Forget bag}
Let $X_i=X_j\setminus\{x\}$ be a forget bag with child bag $X_j \ni x$. While the set of all possible matchings on $X_i \cup F_i$ does not change ($\Mmatchset{j}=\Mmatchset{i}$), the equivalence relation $\sim_i$ possibly identifies more matchings than $\sim_j$. For each matching $\Mmatchr \in \Mmatchsetr{j}$, a new matching $\Mmatch'$ is obtained by deleting coordinate $x$ of $\mathbf{v}(\Mmatchr)$. If $c(x)=x$, $\textbf{uf}(\Mmatchr)$ needs to be updated. To do so, the set of nodes $X_i$ is traversed twice, once to look for node $y\neq x$ of minimal index such that $c(y)=c(x)$ (eventually, $y$ is empty), and a second time to replace $x$ by $y$ each time $x$ is used as a component index. If $x$ was unmatched in $\Mmatchr$ (i.e., $\mathbf{v}(\Mmatchr)_{|x}=0$), then we set $m(\Mmatch')=m(\Mmatchr)+1$, otherwise we set $m(\Mmatch')=m(\Mmatchr)$.
Once the set $\Mmatchset{j}'$ of all the newly generated $\Mmatch'$ is computed, $\Mmatchset{i}$ is obtained as the quotient of $\Mmatchset{j}'$ by $\sim_i$, the equivalence relation on $X_i$. More precisely, each pair $(\Mmatch',\Mmatch'')\in \Mmatchset{j}'^2$ is tested for equality on both $\mathbf{v}$ and $\mathbf{uf}$. If they are equal, the one with the lowest $m$ is defined to be the new representative in $\Mmatchset{i}$.

\singleimage{Details of the decisions at a forget bag.}{.65}{algorithm_details_forget}
\paragraph*{Running time}
Each new matching $\Mmatch'$ is obtained from a single element of $\Mmatchset{j}$ in worst-case time $O(w_i^2\cdot \log(w_i))$. Equivalent matchings are detected on-the-fly when filling the hash structure of $\Mmatchset{i}$, and each equivalence test is linear in $w_i^2$. The complexity is thus in $O( 2^{w_j^2+w_j} \cdot w_j^2 \cdot \log(w_j) )$.

\singleimage{Details of the decisions at a join bag.}{1}{algorithm_details_join}
\paragraph*{Join bag}
Let $X_i=X_j=X_k$ be a join bag with child bags $X_j$ and $X_k$. The matchings of $\Mmatchset{i}$ are generated by combining all the pairs of matchings $(\Mmatch,\Mmatch')\in \Mmatchset{j} \times \Mmatchset{k}$. A combination is valid if and only if it satisfies both the matching and cycle-free conditions. The matching condition says that a node cannot be matched in both $\Mmatch$ and $\Mmatch'$, which is checked by a logical $AND$ operation ($\mathbf{v}(\Mmatch)\  AND\  \mathbf{v}(\Mmatch')$). The cycle-free condition is checked with the \funname{union-find} structures $\Mmatch$ and $\Mmatch'$: the combination is valid if no node of the component of a matched node in $\mathbf{uf}(\Mmatch)$ is a neighbor of the same component in $\mathbf{uf}(\Mmatch')$ and vice versa, each test requiring $O(w_i^2)$ per component.
If a combination is valid, its structure $\Mmatch''$ is defined by $\mathbf{v}(\Mmatch'') = \mathbf{v}(\Mmatch)\  OR\ \mathbf{v}(\Mmatch')$. The \funname{union-find} structure is initialized from $\mathbf{uf}(\Mmatch)$, and updated as the introduce bag for each matched nodes of $\Mmatch'$. Finally, $m(\Mmatch'') = m(\Mmatch)+m(\Mmatch')$.
As in the forget bag, two combinations may result in equivalent matchings, and we must compare them pairwise and choose the representative with the lowest number of unmatched forgotten bags.
Note that the sets of forgotten nodes of $X_j$ and forgotten nodes of $X_k$ have to be disjoint by the \textit{coherence} of Definition \ref{def:treewidth} and hence no forgotten node can be matched twice in this setting. Furthermore, all possible combinations of matched and unmatched nodes are enumerated in $\Mmatchset{j}$ and $\Mmatchset{k}$ and hence no possible matching is overseen.

\paragraph*{Running time}
Each pair of matchings is validated and updated in time $O(w_i\cdot w_i^2\cdot\log(w_i))$. The comparison and the choice of representative is done on-the-fly when filling the hash structure of $\Mmatchset{i}$. There are at worst $(2^{w_i^2+w_i})^2$ pairs. Thus, the complexity of the join bag dominates all other running times. Therefore, the complexity of the algorithm is in $O( 4^{w_i^2+w_i} \cdot w_i^3 \cdot \log(w_i) )$ per bag.

\paragraph*{Root bag}
Let $X_r=\{x\}$ be the root of $T$. $\Mmatchset{r}$ consists of at most two matchings $\mathbf{v}(\Mmatchr) = [0]$ or $\mathbf{v}(\Mmatchr') = [1]$, where $\textbf{uf}(\Mmatchr)$ is an empty list and $\textbf{uf}(\Mmatchr')$ is defined by $c(x)=x$. It follows that the minimum number of unmatched nodes for any alternating cycle-free matching of $G$ is given by  $m = \min \{ m(\Mmatchr) + 1 , m(\Mmatchr') \}$, and the maximum size of an alternating cycle-free matching is given by $(n - m)/2 $ where $n = |N|$ denotes the number of nodes of $G$.
\singleimage{Algorithm execution on a small bipartite cycle (top left) with its nice tree decomposition (center). At each bag, a set of matchings $\Mmatchsetr{i}$ is generated according to the bag type. $\Mmatchsetr{i}$ is represented on the side of each bag, with the nomenclature illustrated at the top of the figure.}{.9}{algorithm_details_rot}

\paragraph*{Total Running Time}
The total time complexity of the algorithm per bag is bounded above by the running time of the join bag. Since there is a linear number of bags, and since for every bag $X_i$ we have $|X_i| \leq tw(G) + 1 = w + 1$, the total time complexity of the algorithm described above is 
$$O( 4^{w^2+w} \cdot w^3 \cdot \log(w) \cdot n) .$$

\clearpage
\section{Proof of Theorem~\ref{thm:fpg}}\label{app:fpg}

\setcounter{theorem}{5}
\begin{theorem}
	Let $G$ be the spine of a simplicial triangulation of a $3$-manifold $\mathcal{T}$. If $\mathbf{tw}(\Gamma(\mathcal{T})) \leq k$, then $\mathbf{tw}(G) \leq 10k+9$.
\end{theorem}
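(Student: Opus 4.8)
The plan is to lift a tree decomposition of the dual graph $\Gamma(\mathcal{T})$ directly to one of the spine $G$ by expanding each tetrahedron into the lower-dimensional simplexes it carries. Concretely, I would start with a tree decomposition $(T, \{B_i\})$ of $\Gamma(\mathcal{T})$ of width at most $k$, so that each bag $B_i$ is a set of at most $k+1$ tetrahedra. Keeping the same tree $T$, I would replace each bag $B_i$ by the bag $B_i'$ consisting of every $1$- and $2$-simplex of $\mathcal{T}$ that is a face of some tetrahedron in $B_i$. Since each tetrahedron has exactly $6$ edges and $4$ triangles, $|B_i'| \leq 10\,|B_i| \leq 10(k+1)$, which would yield the claimed width $\max_i |B_i'| - 1 \leq 10k + 9$.

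First I would verify \emph{node coverage} and \emph{arc coverage}, which are straightforward. Every triangle of $\mathcal{T}$ is a face of (exactly two, since $\mathcal{T}$ is closed) tetrahedra, and every edge is a face of at least one tetrahedron, so each node of $G$ lies in some $B_i'$. For arc coverage, an incidence between an edge $e$ and a triangle $t$ with $e \subset t$ means that both $e$ and $t$ are faces of any tetrahedron containing $t$; the bag expanding that tetrahedron then contains both, so the arc is covered.

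The main obstacle is \emph{coherence} (the connectedness condition). For a fixed simplex $\sigma$ of $\mathcal{T}$, write $T_\tau$ for the subtree of $T$ whose bags contain the tetrahedron $\tau$; by the original decomposition each $T_\tau$ is connected. The bags of the new decomposition containing $\sigma$ are exactly those in $\bigcup_{\tau \supset \sigma} T_\tau$, so I must show this union is connected. For a triangle $t$, the two tetrahedra containing it are adjacent in $\Gamma(\mathcal{T})$, so by arc coverage of the original decomposition some bag holds both, forcing $T_{\tau_1} \cap T_{\tau_2} \neq \emptyset$ and hence the union connected. The delicate case is an edge $e$, which may lie in many tetrahedra. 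Here I would invoke the manifold structure: in a closed $3$-manifold the link of $e$ is a triangulated circle, so the tetrahedra incident to $e$ admit a cyclic ordering $\tau_1, \dots, \tau_m$ in which consecutive tetrahedra share a triangle through $e$ and are therefore adjacent in $\Gamma(\mathcal{T})$. Consecutive subtrees $T_{\tau_i}$ and $T_{\tau_{i+1}}$ then intersect, and a finite chain of connected sets with overlapping consecutive members has connected union; thus $\bigcup_i T_{\tau_i}$ is connected.

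Once coherence is established for both edges and triangles, all three axioms of Definition~\ref{def:treewidth} hold for $(T, \{B_i'\})$, making it a valid tree decomposition of $G$ of width at most $10k+9$. I expect the only genuine work to be the edge case of coherence, where the circular link of an edge in a closed $3$-manifold is precisely what guarantees that the incident tetrahedra form a connected (indeed cyclic) neighbourhood in the dual graph.
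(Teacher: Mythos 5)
Your proposal is correct and follows essentially the same route as the paper's own proof: expand each dual-graph bag to the $6+4=10$ edges and triangles of its tetrahedra, check node and arc coverage directly, and establish coherence for an edge $e$ by using the cyclic ordering of the tetrahedra around $e$, whose consecutive members are adjacent in $\Gamma(\mathcal{T})$, so the corresponding subtrees overlap and their union is connected. The only cosmetic difference is that the paper also treats boundary triangles as a separate (vacuous, for closed manifolds) case, while you appeal directly to closedness.
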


\begin{proof}
Let $T$ be a tree decomposition of the dual graph, where each bag $X_i$ contains less or equal $k+1$ tetrahedra. We show how to construct a tree decomposition $T'$ of the spine of $\mathcal{T}$, modeled on the same underlying tree as $T$, in which each bag $X'_i$ contains less or equal $10(k+1)$ edges and triangles.

For each bag $X_i$ of $T$, we simply define the bag $X'_i$ to contain all edges and triangles of all tetrahedra in $X_i$.  It remains to verify the three properties of a tree decomposition (Definition \ref{def:treewidth}).

\paragraph*{Node coverage}
It is clear that every edge or triangle in the spine belongs to some bag $X'_i$, since every edge or triangle is contained in some tetrahedron $\delta$, which belongs to some bag $X_i$.

\paragraph*{Arc coverage}
Consider some arc in the spine.  This must join a triangle $t$ to an edge e that contains it.  Let $\delta$ be some tetrahedron containing $t$; then $\delta$ contains both $t$ and e, and so if $X_i$ is a bag containing $\delta$ then the corresponding bag $X'_i$ contains the chosen arc in the spine (joining $t$ with e).

\paragraph*{Coherence}
Here we treat edges and triangles separately. 

Let $t$ be some triangle in the simplicial complex.  We must show that the bags containing $t$ correspond to a connected subgraph of the underlying tree. If $t$ is a boundary triangle, then $t$ belongs to a unique tetrahedron $\delta$, and the bags $X'_i$ that contain $t$ correspond precisely to the bags $X_i$ that contain $\delta$.  Since the tree decomposition T satisfies the connectivity property, these bags correspond to a connected subgraph of the underlying tree. If $t$ is an internal triangle, then $t$ belongs to two tetrahedra $\delta_1$ and $\delta_2$, and the bags $X'_i$ that contain $t$ correspond to the bags $X_i$ that contain \emph{either} $\delta_1$ or $\delta_2$.  By the connectivity property of the original tree, the bags containing $\delta_1$ describe a connected subgraph of the tree, and so do the bags containing $\delta_2$. Furthermore, there is an arc in the dual graph from $\delta_1$ to $\delta_2$, and so some bag $X_i$ contains {\em both} $\delta_1$ and $\delta_2$.  Thus the union of these two connected subgraphs is another connected subgraph, and we have established the connectivity property for $t$.

Now let $e$ be some edge of the simplicial complex.  Again, we must show that the bags containing e correspond to a connected subgraph of the underlying tree.  This is simply an extension of the previous argument.  Suppose that e belongs to the tetrahedra $\delta_1,\ldots,\delta_m$ (ordered cyclically around e).  Then for each $\delta_j$, the bags $X_i$ that contain $\delta_j$ describe a connected subgraph of the underlying tree, and the bags $X'_j$ containing e describe the union of these subgraphs, which we need to show is again connected.  This follows because there is a sequence of arcs in the dual graph $(\delta_1,\delta_2)$, $(\delta_2,\delta_3)$ and so on; from the tree decomposition $T$ it follows that the subgraph for $\delta_1$ meets the subgraph for $\delta_2$, the subgraph for $\delta_2$ meets the subgraph for $\delta_3$, and so on.  Therefore the union of these subgraphs is itself connected.
\end{proof}


\begin{thebibliography}{10}
\bibliographystyle{abbrv}
\bibitem{abrahamson1995fixed}
K.~A. Abrahamson, R.~G. Downey, and M.~R. Fellows.
\newblock Fixed-parameter tractability and completeness {IV}: On completeness
  for {W[P]} and {PSPACE} analogues.
\newblock {\em Annals of pure and applied logic}, 73(3):235--276, 1995.

\bibitem{agol06-knotgenus}
I.~Agol, J.~Hass, and W.~Thurston.
\newblock The computational complexity of knot genus and spanning area.
\newblock {\em Transactions of the American Mathematical Society},
  358(9):3821--3850, 2006.

\bibitem{ayala2012perfect}
R.~Ayala, D.~Fern{\'a}ndez-Ternero, and J.~Vilches.
\newblock Perfect discrete morse functions on triangulated 3-manifolds.
\newblock {\em Computational Topology in Image Context}, pages 11--19, 2012.

\bibitem{Bodlaender05DiscoveringTreewidth}
H.~Bodlaender.
\newblock Discovering treewidth.
\newblock In P.~Vojt{\'a}{\v s}, M.~Bielikov{\'a}, B.~Charron-Bost, and
  O.~S{\'y}kora, editors, {\em SOFSEM 2005: Theory and Practice of Computer
  Science}, volume 3381 of {\em Lecture Notes in Computer Science}, pages
  1--16. Springer Berlin Heidelberg, 2005.

\bibitem{bodlaender1993treedcomposition}
H.~L. Bodlaender.
\newblock A linear time algorithm for finding tree-decompositions of small
  treewidth.
\newblock In {\em Symposium on Theory of Computing}, pages 226--234. ACM, 1993.

\bibitem{Bodlaender96EfficientTreeDecAlgo}
H.~L. Bodlaender and T.~Kloks.
\newblock Efficient and constructive algorithms for the pathwidth and treewidth
  of graphs.
\newblock {\em J. Algorithms}, 21(2):358--402, 1996.

\bibitem{burton04-regina}
B.~A. Burton.
\newblock Introducing {R}egina, the 3-manifold topology software.
\newblock {\em Experimental Mathematics}, 13(3):267--272, 2004.

\bibitem{burton11-genus}
B.~A. Burton.
\newblock Detecting genus in vertex links for the fast enumeration of
  3-manifold triangulations.
\newblock In {\em International Symposium on Symbolic and Algebraic
  Computation}, pages 59--66. ACM, 2011.

\bibitem{regina}
B.~A. Burton, R.~Budney, W.~Pettersson, et~al.
\newblock Regina: Software for 3-manifold topology and normal surface theory.
\newblock \url{http://regina.sourceforge.net/}, 1999--2012.

\bibitem{burton2012complexity}
B.~A. Burton and J.~Spreer.
\newblock The complexity of detecting taut angle structures on triangulations.
\newblock In {\em {SODA} '13: Proceedings of the Twenty-Fourth Annual
  {ACM-SIAM} Symposium on Discrete Algorithms}, pages 168--183. SIAM, 2013.

\bibitem{chari2000decomposition}
M.~K. Chari.
\newblock On discrete {M}orse functions and combinatorial decompositions.
\newblock {\em Discrete Mathematics}, 217:101--113, 2000.

\bibitem{Cohen1973}
M.~M. Cohen.
\newblock {\em A course in simple homotopy theory}.
\newblock Graduate text in Mathematics. Springer, New York, 1973.

\bibitem{courcelle1990monadic}
B.~Courcelle.
\newblock The monadic second-order logic of graphs {I}: recognizable sets of
  finite graphs.
\newblock {\em Information and computation}, 85(1):12--75, 1990.

\bibitem{Dey1999comptopo}
T.~K. Dey, H.~Edelsbrunner, and S.~Guha.
\newblock Computational topology.
\newblock In {\em Advances in Discrete and Computational Geometry}, volume 223
  of {\em Contemporary mathematics}, pages 109--143. AMS, 1999.

\bibitem{downey1994parameterized}
R.~Downey, M.~Fellows, B.~Kapron, M.~Hallett, and H.~Wareham.
\newblock The parameterized complexity of some problems in logic and
  linguistics.
\newblock {\em Logical Foundations of Computer Science}, pages 89--100, 1994.

\bibitem{downey1999parameterized}
R.~G. Downey and M.~R. Fellows.
\newblock {\em Parameterized complexity}, volume~3.
\newblock Springer New York, 1999.

\bibitem{simpcompISSAC}
F.~Effenberger and J.~Spreer.
\newblock simpcomp - a {GAP} toolbox for simplicial complexes.
\newblock {\em ACM Communications in Computer Algebra}, 44(4):186 -- 189, 2010.

\bibitem{simpcomp}
F.~Effenberger and J.~Spreer.
\newblock simpcomp - a {GAP} package, {V}ersion 1.5.4.
\newblock \url{http://code.google.com/p/simpcomp/}, 2011.

\bibitem{eugeciouglu1996computationally}
{\"O}.~E{\u{g}}ecio{\u{g}}lu and T.~F. Gonzalez.
\newblock A computationally intractable problem on simplicial complexes.
\newblock {\em Computational Geometry}, 6(2):85--98, 1996.

\bibitem{forman1998morse}
R.~Forman.
\newblock Morse theory for cell complexes.
\newblock {\em Advances in Mathematics}, 134(1):90--145, 1998.

\bibitem{Gabai09MinVolCuspedHyperb3Mflds}
D.~Gabai, R.~Meyerhoff, and P.~Milley.
\newblock Minimum volume cusped hyperbolic three-manifolds.
\newblock {\em Journal of the American Mathematical Society}, 22(4):1157--1215,
  2009.

\bibitem{golumbic2001uniquely}
M.~C. Golumbic, T.~Hirst, and M.~Lewenstein.
\newblock Uniquely restricted matchings.
\newblock {\em ALGORITHMICA-NEW YORK-}, 31(2):139--154, 2001.

\bibitem{guenther12}
D.~G\"unther, J.~Reininghaus, H.~Wagner, and I.~Hotz.
\newblock Efficient computation of 3{D} {M}orse-{S}male complexes and
  persistent homology using discrete {M}orse theory.
\newblock {\em The Visual Computer}, 28:959--969, 2012.

\bibitem{gyulassy2008visualization}
A.~G. Gyulassy.
\newblock {\em Combinatorial construction of {M}orse-{S}male complexes for data
  analysis and visualization}.
\newblock PhD thesis, UC Davis, 2008.
\newblock Advised by Bernd Hamann.

\bibitem{jonsson2005graph}
J.~Jonsson.
\newblock {\em Simplicial complexes of graphs}.
\newblock PhD thesis, KTH, 2005.
\newblock Advised by Anders Bj{\"{o}}rner.

\bibitem{joswig2004computing}
M.~Joswig and M.~E. Pfetsch.
\newblock Computing optimal discrete morse functions.
\newblock {\em Electronic Notes in Discrete Mathematics}, 17:191--195, 2004.

\bibitem{joswig2006computing}
M.~Joswig and M.~E. Pfetsch.
\newblock Computing optimal morse matchings.
\newblock {\em SIAM Journal on Discrete Mathematics}, 20(1):11--25, 2006.

\bibitem{kloks1994treewidth}
T.~Kloks.
\newblock {\em Treewidth: computations and approximations}, volume 842.
\newblock Springer, 1994.

\bibitem{lange2004topocombinatorics}
C.~Lange.
\newblock {\em Combinatorial Curvatures, Group Actions, and Colourings: Aspects
  of Topological Combinatorics}.
\newblock PhD thesis, Technische Universit{\"a}t, Berlin, 2004.
\newblock Advised by G{\"u}nter M. Ziegler.

\bibitem{lewiner2005geometric}
T.~Lewiner.
\newblock {\em Geometric discrete Morse complexes}.
\newblock PhD thesis, Mathematics, PUC-Rio, 2005.

\bibitem{lewiner2003optimal}
T.~Lewiner, H.~Lopes, and G.~Tavares.
\newblock Optimal discrete morse functions for 2-manifolds.
\newblock {\em Computational Geometry}, 26(3):221--233, 2003.

\bibitem{lewiner2003toward}
T.~Lewiner, H.~Lopes, and G.~Tavares.
\newblock Toward optimality in discrete morse theory.
\newblock {\em Experimental Mathematics}, 12(3):271--285, 2003.

\bibitem{lewiner2004apps}
T.~Lewiner, H.~Lopes, and G.~Tavares.
\newblock Applications of {F}orman's discrete {M}orse theory to topology
  visualization and mesh compression.
\newblock {\em Transactions on Visualization and Computer Graphics},
  10(5):499--508, 2004.

\bibitem{Lutz08ThreeMfldsWith10Vertices}
F.~H. Lutz.
\newblock Combinatorial 3-manifolds with 10 vertices.
\newblock {\em Beitr\"age Algebra Geom.}, 49(1):97--106, 2008.

\bibitem{malgouyres2008determining}
R.~Malgouyres and A.~Franc{\'e}s.
\newblock Determining whether a simplicial 3-complex collapses to a 1-complex
  is {NP}-complete.
\newblock In {\em Discrete Geometry for Computer Imagery}, pages 177--188.
  Springer, 2008.

\bibitem{moise1952}
E.~E. Moise.
\newblock Affine structures in 3-manifolds {V}: The triangulation theorem and
  hauptvermutung.
\newblock {\em Annals of Mathematics}, 56(1):96--114, 1952.

\bibitem{morse1931smoothmorse}
M.~Morse.
\newblock The critical points of functions of n variables.
\newblock {\em Transactions of the American Mathematical Society}, 33:77--91,
  1931.

\bibitem{Perkovic99TreeDecAlgo}
L.~Perkovi{\'c} and B.~Reed.
\newblock An improved algorithm for finding tree decompositions of small width.
\newblock In {\em Graph-theoretic concepts in computer science ({A}scona,
  1999)}, volume 1665 of {\em Lecture Notes in Comput. Sci.}, pages 148--154.
  Springer, Berlin, 1999.

\bibitem{reeb1946points}
G.~Reeb.
\newblock Sur les points singuliers d'une forme de {P}faff compl{\`e}tement
  int{\'e}grable ou d'une fonction num{\'e}rique.
\newblock {\em Comptes Rendus de L'Acad{\'e}mie des Sciences de Paris},
  222:847--849, 1946.

\bibitem{reininghaus2012computationalmorse}
J.~Reininghaus.
\newblock {\em Computational Discrete Morse Theory}.
\newblock PhD thesis, Freie Universit{\"a}t, Berlin, 2012.
\newblock Advised by Ingrid Hotz.

\bibitem{Robins2011}
V.~Robins, P.~J. Wood, and A.~P. Sheppard.
\newblock Theory and algorithms for constructing discrete {M}orse complexes
  from grayscale digital images.
\newblock {\em Transactions on Pattern Analysis and Machine Intelligence},
  33(8):1646--1658, 2011.

\bibitem{szymczak1999grow}
A.~Szymczak and J.~Rossignac.
\newblock Grow \& fold: Compression of tetrahedral meshes.
\newblock In {\em Symposium on Solid Modeling and Applications}, pages 54--64.
  ACM, 1999.

\bibitem{tancer2008d}
M.~Tancer.
\newblock Strong {$d$}-collapsibility.
\newblock {\em Contrib. Discrete Math.}, 6(2):32--35, 2011.

\bibitem{Tarjan1975unionfind}
R.~E. Tarjan.
\newblock Efficiency of a good but not linear set union algorithm.
\newblock {\em Journal of the ACM}, 22(2):215--225, 1975.

\bibitem{Dijk06TreewidthDotCom}
T.~van Dijk, J.-P. van~den Heuvel, and W.~Slob.
\newblock Computing treewidth with {LibTW}.
\newblock \url{http://www.treewidth.com/}, 2006.

\end{thebibliography}
\end{document}